\documentclass[11pt]{article}
\usepackage[margin = 1.5in]{geometry}
\usepackage{mathpazo}\usepackage{amssymb}
\usepackage{amsmath}
\usepackage{amsthm}

\usepackage{tikz}
\usetikzlibrary{decorations.pathreplacing}

\usepackage{mathrsfs}
\newcommand{\scr}{\mathscr}
\newcommand{\sfrac}[2]{{\textstyle\frac{#1}{#2}}}
\newcommand{\diag}{\operatorname{diag}}
\newcommand{\degree}{d}

\newcommand{\poly}{\mathit{poly}}

\newtheorem{lem}{Lemma}
\newtheorem{thm}{Theorem}

\begin{document}

\title{The Parity of Directed Hamiltonian Cycles}
%\date{\today\footnote{Revision \GITAbrHash.}}
\date{}
\author{Andreas Bj\"orklund and Thore Husfeldt\footnote{Lund
    University, Sweden and IT University of Copenhagen, Denmark. This
    work is partially supported by the Swedish Research Council, grant VR 2012-4730.}}
\maketitle
\begin{abstract}
  We present a deterministic algorithm that given any directed graph
  on $n$ vertices computes the parity of its number of Hamiltonian
  cycles in $O(1.619^n)$ time and polynomial space.
  For bipartite graphs, we give a $1.5^n\poly(n)$
  expected time algorithm.

  Our algorithms are based on a new combinatorial formula for the
  number of Hamiltonian cycles modulo a positive integer.
\end{abstract}

\section{Introduction}

It is known since the 1960s that Hamiltonian cycles in an $n$-vertex
graph can be detected and counted in $O(2^nn^2)$ time~\cite{B62,HK62}.
In an influential survey, Woeginger~\cite{W03} asked if this could be
significantly improved.
Recently, two results using different techniques have met this
challenge:
\begin{enumerate}
\item a randomized $O(1.657^n)$ time algorithm for the decision
  problem in undirected graphs~\cite{Bjorklund10},
\item a randomized $O(1.888^n)$ time algorithm for the decision problem
  in directed bipartite graphs~\cite{Cygan12}.
\end{enumerate}
For the general directed graph case, no such algorithm has yet appeared.

Intriguingly, the foundation for the construction from \cite{Cygan12}
is an algorithm for the related $\oplus \mathrm P$-complete problem of
computing the parity of the number of Hamiltonian cycles (in a
bipartite graph).
Our contribution in the present paper is to compute this value (for
any directed graph) in time within a polynomial factor of $O(\phi^n)$,
where $\phi=\frac{1}{2}(1+\surd 5)<1.619$ is the golden ratio.

While this does not seem to shed any direct light on the decision
problem in directed graphs, it does however ask a related interesting
question: Could it really be easier to solve a $\oplus \mathrm
P$-complete problem than its decision counterpart?
There is still no known $(2-\Omega(1))^n$ time algorithm for the decision problem
and even the recent decision algorithm for 
undirected graphs \cite{Bjorklund10} is slower than our present
algorithm.

In contrast, current evidence for the well-studied CNF Satisfiability problem
points in the opposite direction: It is known that a fast algorithm
for computing the parity of the number of satisfiable assignments to a CNF formula on $n$ Boolean variables would disprove
the Strong Exponential Time Hypothesis~\cite{Cyganetal}, i.e. a $(2-\Omega(1))^n$ time algorithm
computing the parity of the satisfying assignments implies a $(2-\Omega(1))^n$ time algorithm for deciding if a formula has
a satisfying assignment at all.

\subsection{Results}

Let $\scr H$ denote the set of (directed) Hamiltonian
cycles of a directed input graph $G=(V,E)$ on $n=|V|$ vertices.

Our main result is that the parity $\oplus\scr H$ of the number of
directed Hamiltonian cycles can be computed much faster than in $2^n$
time.

\begin{thm}\label{thm: main}
  We can compute  $\oplus\scr H$ for a general $n$-vertex digraph in time within a polynomial
  factor of the $n$th Fibonacci number $F_n\in O(1.619^n)$ and
  polynomial space.
\end{thm}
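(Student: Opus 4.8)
The plan is to start from the classical inclusion--exclusion identity for Hamiltonian cycles and then find a modulo-$m$ refinement that collapses its $2^n$ terms to Fibonacci-many. First I would fix a vertex $s$ and split it into an out-copy and an in-copy, turning directed Hamiltonian cycles through $s$ into directed Hamiltonian paths between the two copies; counting such paths is the same as counting length-$(n-1)$ walks that hit every vertex exactly once. The standard way to enforce ``every vertex exactly once'' is inclusion--exclusion over the set $S$ of forbidden vertices, $\sum_{S}(-1)^{|S|}w(S)$, where $w(S)$ is the number of walks avoiding $S$; each $w(S)$ is an entry of a power of the (restricted) adjacency matrix and is computable in polynomial time and space by repeated squaring. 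Reducing everything modulo $2$ --- indeed modulo any $m$, which is exactly why I want the formula phrased mod a general integer --- kills the signs and gives $\oplus\scr H \equiv \bigoplus_S w(S)$, a baseline that already reproduces the $O(2^n)$ bound in polynomial space.

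The heart of the argument, and the step I expect to be the real obstacle, is the new combinatorial formula that shrinks this sum from $2^n$ to Fibonacci-many terms while preserving its residue. My approach would be to impose a linear order $v_1,\dots,v_n$ on the vertices and look for a fixed-point-free involution on the ``bad'' forbidden-sets --- those containing two order-consecutive indices --- so that such sets cancel in pairs modulo $2$ and only the ``sparse'' sets (no two consecutive indices) survive; the sparse subsets of an $n$-element chain are exactly Fibonacci-many, which is precisely the count we are aiming for. Making this rigorous is delicate: the involution must pair a bad set $S$ with a set $S'$ for which $w(S)\equiv w(S')$, so it has to be defined at the level of the walks or permutations being counted rather than on the index sets alone, and one must check it is a genuine involution with no unintended fixed points. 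Equivalently, the same quantity can be organized as a two-way branching recursion --- at each step commit either one vertex or a consecutive pair --- whose recurrence $T(n)=T(n-1)+T(n-2)+\poly(n)$ solves to $O(\phi^n)=O(F_n)$; the golden ratio is literally the branching number of such a $(1,2)$-branching, a reassuring sign that the Fibonacci bound is the right target.

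Granting the formula, the algorithm is then routine: enumerate the Fibonacci-many surviving sparse structures one at a time with polynomial delay and polynomial memory --- sparse subsets of a chain, equivalently tilings of a path by monominoes and dominoes, admit exactly such an enumeration --- and for each one evaluate the corresponding walk count, a fixed entry of a matrix power over $\mathbb Z/m$, in polynomial time and space, accumulating the running total modulo $m$. Taking $m=2$ yields $\oplus\scr H$. Since we never store more than one structure and one matrix at a time, the space stays polynomial, and the total time is $F_n\cdot\poly(n)\in O(1.619^n)$ as claimed. I would establish correctness by proving the mod-$m$ identity (the crux above) and observing that every remaining ingredient --- the vertex split, the walk/inclusion--exclusion correspondence, and repeated squaring --- is exact and standard.
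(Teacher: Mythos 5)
Your proposal stands or falls on the claimed refinement of the classical inclusion--exclusion identity, and that step fails. Write $\oplus\scr H=\bigoplus_{S\subseteq V\setminus\{s\}} w(S)$, where $w(S)$ counts closed length-$n$ walks through $s$ avoiding $S$ (your vertex-split formulation is equivalent). You want the ``bad'' sets $S$ --- those containing two order-consecutive vertices --- to cancel in pairs mod~2, leaving only the Fibonacci-many sparse sets. But $w(S)$ depends on the actual graph, not merely on the pattern of $S$, and in general the bad terms do \emph{not} sum to zero mod~2. Concretely: take $V=\{s,v_1,v_2,v_3\}$ with edges $s\to v_1$, $s\to v_3$, $v_3\to s$, and the order $v_1<v_2<v_3$. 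The bad sets are $\{v_1,v_2\}$, $\{v_2,v_3\}$, $\{v_1,v_2,v_3\}$, with walk counts $1$ (the walk $s\,v_3\,s\,v_3\,s$), $0$, and $0$ --- an odd total. Since the total bad contribution is odd, no fixed-point-free involution can exist at \emph{any} level of refinement (sets, walks, or permutations), so the escape hatch you leave yourself (``defined at the level of the walks'') is closed as well. Here the graph has no Hamiltonian cycle, so $\oplus\scr H=0$, yet the sparse-only sum equals $1$: your algorithm would output the wrong answer. The asserted $(1,2)$-branching recursion $T(n)=T(n-1)+T(n-2)+\poly(n)$ is a restatement of the same unproved cancellation, not an independent argument. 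A further warning sign you might have heeded: your route would yield a deterministic algorithm with no randomness anywhere, whereas obtaining the bound deterministically is genuinely nontrivial.

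The paper does not thin the classical inclusion--exclusion at all; it proves a new formula (theorem~\ref{thm: comb}) over $3$-partitions $(X,Y,Z)$, reduces mod~2 to a sum over sets $X$ in which every vertex has odd out-degree into $X$ --- exactly the solutions of the quadratic system $\mathbf x\circ A\mathbf x=\mathbf x$ over GF(2) --- and makes such $X$ rare by adding \emph{random self-loops}, which flip the diagonal of $A$ without touching $\scr H$ (lemma~\ref{lem: expected size} gives expected $1.5^n$ solutions). The Fibonacci bound then comes from covering $\{0,1\}^n$ by $F_{n+2}$ prefix classes, one per element of $\binom{[n-k]}{k}$ or $\binom{[n-k-1]}{k}_1$; each class reduces to a single Gaussian elimination whose solution space has expected size $1$ (lemma~\ref{lem: expected no sols}), the inner factor $f(X)$ is computed by one more linear solve, and the randomness is removed by conditional expectations. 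It is an instructive coincidence that $\sum_k\bigl(\binom{n-k}{k}+\binom{n-k-1}{k}\bigr)=F_{n+2}$ also counts your sparse subsets of a chain: the same number attaches to a completely different object (affine slices of the Boolean cube, not surviving inclusion--exclusion terms), and the numerical match lends no support to the cancellation your proof would need.
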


For the restricted family of bipartite graphs, we can show a stronger
bound.
This result is also somewhat easier to prove.

\begin{thm}\label{thm: bip}
  We can compute $\oplus\scr H$ for a bipartite $n$-vertex digraph in expected time within 
  a polynomial factor of $O(1.5^n)$ and polynomial space.
\end{thm}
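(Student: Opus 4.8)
The plan is to exploit the alternating structure of bipartite digraphs together with the characteristic-two coincidence of permanent and determinant. Fix the bipartition $V=X\cup Y$ with $|X|=|Y|=m=n/2$. Because edges only run between the sides, every directed Hamiltonian cycle visits $X$ and $Y$ alternately, and hence decomposes uniquely into a pair of edge-respecting bijections $f\colon X\to Y$ and $g\colon Y\to X$ whose induced permutation $\tau=g\circ f$ of $X$ is a single $m$-cycle; conversely any such pair is a Hamiltonian cycle. Writing $B,C$ for the $X\to Y$ and $Y\to X$ adjacency matrices over $\mathbb F_2$, the number of pairs realizing any prescribed block pattern is a product of permanents of submatrices of $B$ and $C$, and in characteristic two each permanent equals a determinant and is therefore computable in polynomial time and space. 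So the entire task reduces to sieving out, modulo $2$, exactly those pairs $(f,g)$ for which $\tau$ is a single cycle. As a correct starting identity I would fix a vertex $v$ and use the standard walk sieve $\oplus\scr H\equiv\sum_{S\subseteq V\setminus\{v\}}W(S)\pmod 2$, where $W(S)$ counts closed walks of length $n$ from $v$ avoiding $S$; every non-Hamiltonian closed walk is deleted by an even number of sets $S$, so only the Hamiltonian cycles survive mod $2$.

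The next step is to make the sieve cheap by isolating the single-cycle condition through a cut-based count. Splitting $S=S_X\cup S_Y$ and, more usefully, counting pairs by the way they cross a cut of $X$, the natural subroutine sums over a chosen $k$-subset $T\subseteq X$ together with its image $U=f(T)\subseteq Y$, each contributing a product such as
\[
\mathrm{perm}(B[T,U])\,\mathrm{perm}(C[U,T])\,\mathrm{perm}(B[\bar T,\bar U])\,\mathrm{perm}(C[\bar U,\bar T]).
\]
The difficulty, and the heart of the proof, is precisely that the image set $U$ cannot be eliminated: the $B$- and $C$-factors couple $f$ and $g$ through $U$, so a Cauchy–Binet/Laplace collapse to a single determinant fails, and ranging over all admissible pairs $(T,U)$ with $|T|=|U|$ costs $\sum_k\binom mk^2=\binom{2m}{m}\approx 2^{\,n}$, no better than the textbook algorithm. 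Breaking this coupling between the two sides without paying $2^n$ is the obstacle to overcome.

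To resolve it I would randomize rather than enumerate images. The idea is to draw random tags on the vertices of $Y$ (equivalently a random refinement of one side) so that, for each sieve set $T$, only the images $U$ consistent with the tags contribute, and the determinant evaluations then stream over a random subfamily of cross-patterns. The payoff of the bipartite alternation is that the two sides decouple into two independent factors, each contributing an expected $(3/2)^{m}$ surviving configurations, for a total expected $(3/2)^{2m}=1.5^{\,n}$. Each survivor costs one characteristic-two determinant, i.e.\ polynomial time and polynomial space, and since the output is a single parity bit we may repeat the whole experiment to drive the error down, so no derandomization is required.

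The step I expect to be hardest, and would treat most carefully, is the randomized isolation lemma: proving that the random tagging reproduces the correct parity of single-cycle pairs in expectation while provably shrinking the support of the sieve from $\binom{2m}{m}$ down to $O(1.5^{\,n})$ terms. This reduces to a balls-and-bins estimate on how many image sets $U$ survive a random tagging, and obtaining the sharp per-side constant $3/2$—rather than some weaker bound—is exactly where the bipartite structure must be used most precisely. Once that estimate is in hand, multiplying the expected number of terms by the polynomial cost of each determinant yields the claimed $1.5^{\,n}\poly(n)$ expected time and polynomial space.
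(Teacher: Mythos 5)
There is a genuine gap, and it sits exactly where you predict: the ``randomized isolation lemma'' is not an estimate that can be postponed --- it \emph{is} the algorithm --- and as sketched it is not even well defined. You never specify what the random tags are, which pairs $(T,U)$ survive a tagging, or why the surviving terms still sum to $\oplus\scr H$. Worse, the correctness framing is incoherent: you want the tagging to ``reproduce the correct parity in expectation,'' but expectation does not commute with reduction modulo $2$, and for a $\{0,1\}$-valued output an estimator that is merely unbiased buys you nothing amplifiable --- repeating the experiment drives error down only if each trial is correct with probability bounded away from $\frac{1}{2}$, an advantage you never establish. (If a bit were \emph{exactly} correct in expectation it would be deterministically correct, so whatever you intend must be weaker, and then majority voting has no guarantee.) The per-side $(3/2)^{m}$ count is likewise asserted rather than derived: no balls-and-bins computation is offered showing that random tags leave $O(1.5^n)$ admissible image sets $U$ while every omitted $(T,U)$ contributes an even amount. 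Note also that $\oplus\scr H$ is $\oplus\mathrm P$-complete, so the characteristic-two collapse of permanent to determinant --- which by itself yields only the parity of \emph{cycle covers} --- cannot absorb the single-cycle constraint; some nontrivial combinatorial identity must carry that constraint, and your sketch contains none.

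The paper resolves precisely these two points, and differently from your plan. First, correctness is never randomized: the random object is a set of self-loops, and since no Hamiltonian cycle uses a self-loop, flipping diagonal entries of the adjacency matrix provably leaves $\scr H$ unchanged, making the algorithm Las Vegas --- randomness affects only the running time. Second, the single-cycle constraint is carried by theorem~\ref{thm: comb} with $K=2$, which expresses $\oplus\scr H$ as a sum over sets $X$ with $\prod_{x\in X}\degree_x(X)$ odd, where each contribution $f(X)$ is the parity of the solution count of a linear system over GF(2), computable by Gaussian elimination in polynomial time. The $1.5^n$ bound is then lemma~\ref{lem: expected size}: after the random diagonal flips, a vector with $k$ ones solves $\mathbf x\circ A\mathbf x=\mathbf x$ with probability $2^{-k}$, so the expected number of contributing $X$ is $\sum_k\binom{n}{k}2^{-k}=1.5^n$. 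Bipartiteness enters not to decouple two permanental factors but to \emph{enumerate} these solutions: fixing the $n/2$ variables on one side of the bipartition makes the quadratic system linear in the variables on the other side, so $2^{n/2}$ Gaussian eliminations list all solutions, and the $2^{n/2}\poly(n)$ enumeration cost is dominated by the expected $1.5^n$ output size.
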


Both results rely on a new characterization of the number of Hamiltonian
cycles in terms of a local property.

\begin{thm}
\label{thm: comb}
For a vertex subset $X$, let $\degree_v(X)$ denote the number of
directed edges from
vertex~$v$ to a vertex in $X$.
For integer $K\geq 2$,
  \begin{equation*}
  |\scr H|  = \frac{1}{K} \sum_{Z,Y_1,\ldots,Y_K} 
(-1)^{|V\setminus Z|}
  \biggl(\prod_{z\in Z}  \degree_z(V\setminus Z) \biggr)
  \prod_{k=1}^K
  \biggl(\prod_{y\in Y_k}  \degree_y(Y_k) \biggr)
   \pmod K\,,\end{equation*}
where the sum is over all $(K+1)$-partitions of $V$.
\end{thm}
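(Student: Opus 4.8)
The plan is to read the right-hand side combinatorially as a weighted count of pairs $(g,P)$, where $g$ assigns to every vertex $v$ one outgoing edge $(v,g(v))\in E$, and $P=(Z,Y_1,\ldots,Y_K)$ is a $(K+1)$-partition of $V$ that is \emph{compatible} with $g$ in the sense that $g(z)\in V\setminus Z$ for every $z\in Z$ and $g(y)\in Y_k$ for every $y\in Y_k$. Indeed, the factor $\prod_{z\in Z}\degree_z(V\setminus Z)$ counts the ways the vertices of $Z$ may each pick an out-edge leaving $Z$, while each $\prod_{y\in Y_k}\degree_y(Y_k)$ counts the ways the vertices of $Y_k$ may each pick an out-edge staying inside $Y_k$; multiplying them counts exactly the functions $g$ compatible with $P$. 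Hence $K$ times the right-hand side equals $\sum_{(g,P)}(-1)^{|V\setminus Z|}$, and I would interchange the order of summation, fixing $g$ first and summing the sign over all partitions compatible with $g$.

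Next I would analyse this inner sum through the functional graph of $g$. Viewing the partition as a colouring of $V$ by $\{0,1,\ldots,K\}$ (colour $0$ for $Z$, colour $k$ for $Y_k$), compatibility says that every nonzero colour class is closed under $g$, and every colour-$0$ vertex is sent to a nonzero-coloured vertex. First I would show that each directed cycle of $g$ is monochromatic in a single nonzero colour: a $0$-coloured cycle vertex would need a nonzero successor, but its cycle predecessor would then force it nonzero, a contradiction. Propagating colours forward along the trees hanging off each cycle, the compatibility constraints collapse to one clean condition, namely that every vertex lying in the image of $g$ must carry the colour of its cycle; thus colour $0$ is available only at the \emph{sources} (in-degree-zero vertices) of the functional graph, where it may be chosen freely.

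The crux is then a sign cancellation. For a weakly connected component carrying $s$ source vertices and cycle colour $k$, summing $(-1)^{\#\{\text{nonzero-coloured vertices}\}}$ over the $2^{s}$ free choices at the sources factors as $(-1)^{(\mathrm{size}-s)}\prod_{\text{sources}}\bigl(1+(-1)\bigr)$, which vanishes whenever $s\ge 1$. So every $g$ that is not a permutation contributes nothing, and only the cycle covers of $G$ survive. For a cycle cover $g$ with $\mathrm{cyc}(g)$ cycles one has $Z=\varnothing$, each cycle is independently assigned one of the $K$ nonzero colours, and the contribution is $(-1)^{n}K^{\mathrm{cyc}(g)}$. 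Dividing by $K$, the right-hand side becomes $(-1)^{n}\sum_{g\ \text{cycle cover}}K^{\mathrm{cyc}(g)-1}$.

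Finally I would reduce modulo $K$. Since $K^{\mathrm{cyc}(g)-1}\equiv 0\pmod K$ as soon as $\mathrm{cyc}(g)\ge 2$, the only surviving terms are the single-cycle covers, which are precisely the directed Hamiltonian cycles of $G$, each contributing $K^{0}=1$; this yields $|\scr H|\pmod K$ up to the global sign $(-1)^{n}$. The main obstacle I anticipate is exactly the careful bookkeeping of the sign $(-1)^{|V\setminus Z|}$: verifying that it is constant over the source-colourings one sums (so the cancellation is exact), and reconciling the residual factor $(-1)^{n}$ on the surviving terms with the normalisation in the statement — a point that is in any case immaterial for the parity application $K=2$, where $(-1)^{n}\equiv 1$. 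Everything else is a routine interchange of summation and a count of independent local choices.
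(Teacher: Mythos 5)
Your proof is correct, and it takes a genuinely different route from the paper. The paper builds the formula in three modular steps: it introduces the cut-and-count indicator $h(T)=\frac{1}{K}\sum_{Y_1,\ldots,Y_K}\prod_{e\in T}[e\in G[Y_1]\cup\cdots\cup G[Y_K]]$ and proves $h(C)\equiv[C\in\scr H]\pmod K$ on cycle covers (this is where your monochromatic-cycle observation and the factor $K^{\mathrm{cyc}(g)}$ live); it then passes from cycle covers to arbitrary out-degree-one edge sets by inclusion--exclusion over the allowed terminal set $Z$; and it finally factors the partition sum by an induction peeling off the vertices of $V\setminus Z$ one at a time. You instead verify the identity in a single sweep: expand the products as a count of pairs (out-function $g$, compatible colouring), classify the compatible colourings through the functional graph of $g$ (cycles monochromatic in a nonzero colour, image vertices forced to their cycle's colour, colour $0$ free exactly at the sources), and let the local factor $\bigl(1+(-1)\bigr)^{s}$ annihilate every $g$ with a source. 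Note that this last cancellation is the paper's inclusion--exclusion lemma in disguise: for an out-function with terminal set $X$, summing $(-1)^{|V\setminus Z|}$ over $X\subseteq Z\subseteq V$ is precisely your sum over free source-colourings, and $g$ fails to be a permutation exactly when it has a source. Your route buys a self-contained, transparent verification that unifies the paper's two lemmas into one functional-graph analysis and cleanly avoids the paper's slightly delicate ``divide by $K$, then reduce mod $K$'' bookkeeping (in your version $K\cdot\mathrm{RHS}$ is an exact integer identity before any reduction); the paper's route buys modularity and shows how the formula is \emph{derived} from $|\scr H|=\sum_C h(C)$ rather than merely checked. One compressed step you should expand: to exclude a $0$-coloured vertex on a cycle you must propagate around the entire cycle (its predecessor may itself be $0$-coloured; only after concluding that the whole cycle would be $0$-coloured do you contradict $g(z)\notin Z$).

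Your residual sign $(-1)^n$ is not a defect of your argument but a genuine discrepancy in the statement as printed. Your computation $\mathrm{RHS}=(-1)^n\sum_g K^{\mathrm{cyc}(g)-1}\equiv(-1)^n|\scr H|\pmod K$ is right, and indeed the final display of the paper's own proof carries the sign $(-1)^{|Z|}$, which differs from the statement's $(-1)^{|V\setminus Z|}$ by exactly $(-1)^n$. A directed triangle with $K=3$ confirms your version: the only nonvanishing terms are the $K$ partitions with some $Y_k=V$ and $Z=\emptyset$, so the statement's sign gives $\frac{1}{K}\cdot K\cdot(-1)^3=-1\equiv 2\pmod 3$, whereas $|\scr H|=1$; with $(-1)^{|Z|}$ one gets $1$, as it should be. As you observe, for $K=2$ --- the only case the algorithms use --- the two signs coincide and nothing is affected.
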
%

\noindent Our algorithms use this result for $K=2$, but we state (and prove) it
for general $K$.

\subsection{Related results}

There are nontrivial examples of hard counting problems where
computing the parity is easy.
In fact an important example is given by a closely related problem:
computing the number $\#\scr C$ of disjoint cycle covers in a directed
graph.
This is a hard counting problem, equivalent to computing the permanent
of the adjacency matrix.
On the other hand, the parity $\oplus\scr C$ equals the parity of the
determinant and is therefore computable in polynomial time by Gaussian
elimination.
However, the problem studied in the present paper, $\oplus\scr H$,
is in fact complete for the complexity class $\oplus\mathrm P$, even for
very restricted classes of graphs~\cite{Valiant2005}.

Counting the number of Hamiltonian cycles in an $n$-vertex directed
graph can be done in $o(2^n)$ time \cite{Bjorklund12}.
However, the improvement over the classic algorithms~\cite{B62,HK62}
is relatively small (within a factor of
$\operatorname{exp}(O(\sqrt{n/\log n}))$).
The parity problem appears to be easier.
First, for undirected graphs it is known that when all vertex degrees
are odd, there is an even number of Hamiltonian cycles passing through
each (directed) edge~\cite{T78}.
Second, a recent paper of Cygan, Kratsch, and Nederlof~\cite{Cygan12}
includes a deterministic $O(1.888^n)$ time, exponential space
algorithm computing the parity of the number of Hamiltonian cycles for
undirected and directed bipartite graphs.
Our present algorithm is faster, uses only polynomial space, and works
for unrestricted directed graphs as well.
On the other hand, the algorithm from \cite{Cygan12} works for
weighted graphs.
This is a crucial property because it allows the use of the Isolation
lemma to construct a (randomized) decision algorithm.
Our constructions do not seem to allow this extension.

%The parity and decision problems can be compared on the promise
%problem of detecting a unique Hamiltonian cycle.
%The result of \cite{Bjorklund10} provides a randomized algorithm for
%undirected graphs; our result also applies to directed graphs,
%and is slightly faster on undirected graphs ($O(1.619^n)$ versus
%$O(1.657^n)$). 

\subsection{Open questions}

Our paper accentuates the fact that the exponential time complexity of
counting or deciding the Hamiltonian cycles in a directed graph is not
very well understood.
While there are several examples of $\oplus \mathrm P$-complete
problems whose decision analogue is computationally easier
(e.g., $\oplus$2-Satisfiability), examples of the converse
are not known to the authors.

Under the Exponential Time Hypothesis, the decision problem does not
allow $\exp(o(n))$-time algorithms,
but there are no arguments for or against an $O(1.999^n)$-time algorithm
for the decision problem.

For comparison, it is now known that the parity of the number of set
covers of a given set system on $n$ elements cannot be solved in time
$(2-\Omega(1))^n$ under the Strong Exponential Time Hypothesis.
\cite{Cyganetal}.

\subsection{Overview and techniques}

Our algorithm is based on evaluating theorem~\ref{thm: comb} with $K=2$, which
is a summation of all 3-partitions of $G$.

There are two parts to our algorithm.

First, the vast majority of the terms in the summation can be made to
vanish modulo 2.
Our main algorithmic insight is the fact that any set of self-loops, and hence in particular a random set, can be
added to the vertices of a Hamiltonicity instance without changing the
result.
This idea of randomly modifying the input instance in order to produce
a large number of zero-valued terms is inspired by an algorithm for
computing the permanent by Bax and Franklin \cite{BaxFranklin}.
After this modification, we can avoid explicitly listing the
vanishing terms.
The running time of the algorithm is based on a
careful orchestration of a total search among linear equation systems,
which happens in sections~\ref{sec: bip} and ~\ref{sec: genalg} for
the bipartite and the general input case, respectively.

Second, the contribution of the nonvanishing terms can be computed in
polynomial time via linear algebra, see section~\ref{sec: alg}.

Finally, in section~\ref{sec: comb} we give the proof of
theorem~\ref{thm: comb}, the combinatorial core of our contribution.
This argument at least follows a well-trodden path, if only in the
beginning; the cut-and-count method \cite{Cygan11} provides a connection between
Hamiltonicity and cycle covers, and the principle of
inclusion--exclusion is used to move from sums over permutations to
sums of functions.

\section{Algorithm}
\label{sec: alg}

We begin by rewriting the expression from theorem~\ref{thm: comb}.
First, for $K=2$, we get
\begin{equation}
\label{eq: K=2}
\oplus\scr H = \sfrac{1}{2} \sum_{X,Y,Z}
\biggl(\prod_{x\in X} \degree_x(X) \biggr)
\biggl(\prod_{y\in Y}\degree_y(Y) \biggr) 
\biggl(\prod_{z\in Z} \degree_z(\overline Z) \biggr)
\pmod 2\,,
\end{equation}
where we have introduced $X=Y_1$, $Y=Y_2$ and let $\overline Z$ denote
the vertex complement $\overline Z= V\setminus Z$.
Note in particular that we removed the factor $(-1)^{|\overline Z|}$,
because $-1=+1\pmod 2$, and even though the mod $2$ operation is applied after the division by $2$, every contributing
$Z$ will be counted an even number of times as seen by changing the roles of $X$ and $Y$.

It will be convenient to remove the factor $\sfrac{1}{2}$ altogether.
To this end, consider a binary relation $\prec$ on the subsets of $V$
such that for disjoint $X,Y\subseteq V$, not both empty, either
$X\prec Y$ or $Y\prec X$.
Then \eqref{eq: K=2} is equivalent to
\begin{equation}\label{eq: prec introduced}
\oplus\scr H =  \sum_{\substack{X,Y,Z\\X\prec Y}}
\biggl(\prod_{x\in X} \degree_x(X) \biggr)
\biggl(\prod_{y\in Y}\degree_y(Y) \biggr) 
\biggl(\prod_{z\in Z} \degree_z(\overline Z) \biggr)
\pmod 2\,.
\end{equation}
To see this, first observe that~\eqref{eq: K=2} is symmetric with
respect to $X$ and $Y$.
Furthermore, in a contributing term, the sets $X$ and $Y$ cannot both
be empty, because for $Z=V$ the term $\prod_{z\in Z}
\degree_z(\overline{Z})$ vanishes.
Thus for every partition $(X,Y,Z)$ with $X\prec Y$ that contributes to
\eqref{eq: prec introduced}, there is a twin contribution to
\eqref{eq: K=2}: the distinct partition $(Y,X,Z)$, which has $Y\prec
X$ and in particular $Y\neq X$.

Finally, we separate the expression into the contributions of $X$ and
$Y\cup Z$, respectively:
\begin{equation}\label{eq: factored}
\oplus\scr H =\sum_X f(X)   \prod_{x\in X} \degree_x(X) \pmod 2\,,
\end{equation}
where
\begin{equation}\label{eq: f def}
f(X)=
  \sum_{\substack{Y,Z\\X \prec Y}}
  \biggl(\prod_{y\in Y}  \degree_y(Y) \biggr)
  \prod_{z\in Z}  \degree_z(\overline Z)\,.
\end{equation}
It is understood that the sum in \eqref{eq: factored} is over all
$X\subseteq V$ and the sum in \eqref{eq: f def} is over all partitions
$Y,Z$ of $V\setminus X$.
Still, the whole expression is over almost all $3$--partitions $X,Y,Z$
of the vertices and thus has $\Omega(3^n)$ terms, so it does not in
itself serve as a fast algorithm.

\bigskip
There are two parts to our algorithm.
First, for many $X$, already the factor $\prod_{x\in X} \degree_x(X)$ in
\eqref{eq: factored} will be zero modulo $2$.
Hence, these subsets contribute nothing and we will avoid explicitly
listing them.
Second, for any $X$ for which it is not zero, the value of $f(X)$
modulo $2$ can be computed in polynomial time in $|V|$ via linear
algebra.

On the top level our algorithm is simply an evaluation of \eqref{eq:
  factored} and \eqref{eq: f def}.

\bigskip
\noindent {\bf Algorithm P} (\emph{Parity}.) Given a 
directed graph $G=(V,E)$,  computes $\oplus \scr H$.
\begin{description}\setlength\itemsep{0pt}
\item[P1][Initialise.] Set $s=0$.
\item[P2] [Locate.] List every $X\subseteq V$ such that $\prod_{x\in X} \degree_x(X)$ is odd.
\item[P3] [Contribute.] Compute $f(X) \pmod 2$ for every such $X$ and add it to $s$.
\item[P4][Report.] Return $s \pmod 2$.
\end{description}
Note that steps P2 and P3 can be interleaved in order to avoid
storing all $X$.

We proceed by explaining how to execute steps P2 and P3 efficiently.
 
\subsection{Step P2: Locating Subsets that Contribute}

We say that a subset $X\subseteq V$ is \emph{contributing} if
$\prod_{x\in X} \degree_x(X)$ is odd.
Our aim is to generate the contributing subsets.
We identify $V$ with $\{1,\ldots, n\}$ and introduce the indicator
variables $x_1,\ldots, x_n\in\{0,1\}$ for $X\subseteq V$ with $x_i=1$
if and only if $i\in X$.
Then the constraint that $\degree_i(X)$ is odd can be expressed in terms of
the adjacency matrix $A$ of $G$ as
\[ \sum_{j=1}^n a_{ij}x_j =1 \pmod 2\qquad \text{for all } i\in X\,,\]
equivalently,
\[ x_i\biggl(\sum_{j=1}^n a_{ij}x_j\biggr) =x_i \pmod 2\qquad
\text{for all } i=1,\ldots, n\,.\]

We will view these constraints as a system of $n$ quadratic equations
over GF(2) in the variables $x_1,\ldots,x_n$:
\begin{alignat*}{7}
  x_1(a_{11} x_1 &+a_{12}x_2 &+\cdots +a_{1n}x_n) &= x_1\\
  x_2(a_{21} x_1 &+a_{22}x_2 &+\cdots +a_{2n}x_n) &= x_2\\
&&&\vdots&\\
  x_n(a_{n1} x_1 &+a_{n2}x_2 &+\cdots +a_{nn}x_n) &= x_n\,.
\end{alignat*}
This system can be succinctly expressed as a matrix equation over GF(2):
\begin{equation}\label{eq: quadratic eqs}
\mathbf x\circ A\mathbf x=\mathbf x
\end{equation}
where $\mathbf x=(x_1,\ldots, x_n)$ and the operator $\circ$
denotes the coordinate-wise (or Schur, or Hadamard) product.

\subsection{Random self-loops}

\emph{A priori}, listing the solutions to \eqref{eq: quadratic eqs}
requires $2^n$ steps.
In fact, already the size of the solution set is easily seen to be of
order $2^n$ for some graphs.
For instance, every vector $\mathbf x$ with an odd number of $1$s
solves \eqref{eq: quadratic eqs} if $A$ is the all-1s matrix corresponding to
a directed clique with self-loops.

This motivates the main algorithmic insight in our construction.
Because no Hamiltonian cycle can use a self-loop, we can add
self-loops to the vertices in $G$ (or remove them) without changing $\scr
H$.
Algebraically, manipulation of self-loops corresponds to flipping the
diagonal entries of $A$.

We show that adding these loops at random reduces the expected
solution size for all graphs.

\begin{lem}\label{lem: expected size}
  Let $A$ be an $n\times n$ matrix with $0,1$-entries.
  Choose the diagonal entries $a_{11},\ldots, a_{nn}\in \{0,1\}$
  uniformly and independently at random.
  Then the expected number of solutions to \eqref{eq: quadratic eqs} is $1.5^n$.
\end{lem}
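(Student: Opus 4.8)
The plan is to compute the expected number of solutions to \eqref{eq: quadratic eqs} directly by linearity of expectation, summing the probability that a fixed vector $\mathbf x$ is a solution over all $\mathbf x \in \{0,1\}^n$. Fix such an $\mathbf x$, and let $X = \{i : x_i = 1\}$ be its support. The $n$ equations split into two kinds. For an index $i \notin X$ the equation $x_i(A\mathbf x)_i = x_i$ reads $0 = 0$ and is satisfied automatically, contributing no constraint. For an index $i \in X$ the equation reduces to $(A\mathbf x)_i = 1 \pmod 2$, i.e. $\sum_{j \in X} a_{ij} = 1 \pmod 2$. So $\mathbf x$ is a solution exactly when $\degree_i(X)$ is odd for every $i \in X$, which is $|X|$ parity constraints, one per vertex of $X$.

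Next I would argue that these $|X|$ constraints are independent events and each has probability exactly $\tfrac12$. The key observation is that for $i \in X$ the sum $\sum_{j \in X} a_{ij}$ includes the diagonal term $a_{ii}$, since $i \in X$ means $x_i = 1$. Because the $a_{ii}$ are chosen uniformly and independently, the $i$th constraint $\sum_{j\in X} a_{ij} = 1$ is, conditioned on all off-diagonal entries, a condition on $a_{ii}$ of the form $a_{ii} = c_i$ for a fixed bit $c_i$ determined by the remaining entries, hence satisfied with probability $\tfrac12$. Crucially, distinct constraints $i \neq i'$ in $X$ involve distinct diagonal variables $a_{ii}$ and $a_{i'i'}$, so the $|X|$ events are mutually independent and the probability that $\mathbf x$ is a solution is $2^{-|X|}$.

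Summing over all $\mathbf x$, grouped by the size $|X|$ of the support, gives
\begin{equation*}
\sum_{\mathbf x \in \{0,1\}^n} 2^{-|X|}
= \sum_{k=0}^{n} \binom{n}{k} 2^{-k}
= \left(1 + \tfrac12\right)^{n}
= 1.5^n,
\end{equation*}
by the binomial theorem, which is the claimed bound.

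The argument is short, so there is no single hard obstacle; the one point that must be handled carefully is the independence and the $\tfrac12$-probability claim. The essential ingredient is that every constraint is indexed by a vertex $i \in X$ and genuinely depends on its own diagonal entry $a_{ii}$ (which lies in $X$ precisely because $i \in X$), so the map from the vector of chosen diagonal bits to the satisfaction of the constraints is a bijection coordinate-by-coordinate. I would double-check the edge case $X = \emptyset$: then there are no constraints, the empty product is $1$ (odd), and $\mathbf x = \mathbf 0$ is always a solution, consistent with the $k=0$ term $\binom{n}{0}2^0 = 1$.
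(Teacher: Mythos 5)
Your proposal is correct and follows essentially the same route as the paper's proof: both reduce each equation with $x_i=1$ to a parity constraint involving the fresh diagonal bit $a_{ii}$, conclude a probability of $2^{-|X|}$ by independence of the diagonal entries, and finish with linearity of expectation and the binomial theorem $\sum_k \binom{n}{k}2^{-k} = 1.5^n$. Your explicit conditioning on the off-diagonal entries and the check of the $X=\emptyset$ case merely spell out details the paper leaves implicit.
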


\begin{proof}
  Consider the $i$th entry of $\mathbf x$.
  If $x_i=0$ then the $i$th equation in  \eqref{eq:
    quadratic eqs} is $0=0$ and trivially satisfied.
  If $x_i=1$ then the $i$th equation in \eqref{eq:
    quadratic eqs} is satisfied if
  \[
  1 = \sum_{j=1}^n a_{ij}x_j  = a_{ii} \cdot 1 + \sum_{\substack{j=1\\j
\neq i}}^n a_{ij}x_j  \,,
  \]
  which happens with probability $\frac{1}{2}$ because $a_{ii}$ is
  uniformly distributed.
  By independence of the choices of $a_{ii}$, a vector $\mathbf x$
  with $k$ 1s
  satisfies all the equations with probability $2^{-k}$.
  By linearity of expectation, the expected  number of solutions is 
  \[
  \sum_{k=0}^n\binom{n}{k}2^{-k} = (1+\sfrac{1}{2})^n\,,
  \]
  using the binomial theorem.
\end{proof}

In section \ref{sec: bip} we show that these $1.5^n$ solutions can be
efficiently listed if $G$ is bipartite.
In section \ref{sec: genalg} we show that the solutions can be listed
in time $O(1.619^n)$ for general graphs.

\subsection{Gaussian elimination}

We assume that the following algorithm for solving nonhomogeneous
systems of linear equations is well known.
We repeat it here only to recall that the solution set can be
described in terms of the basis of a translated vector space (the
system's \emph{null space}).
In particular, if a linear equation system has a solution at all, the
number of solutions equals $2^d$, where $d$ is the dimension of the
null space.

\bigskip\noindent {\bf Algorithm G} (\emph{Gaussian elimination}.)
Given a nonhomogeneous system $Ax=c$ of linear equations.
If the system has no solution, outputs ``no''.
Otherwise, outputs a vector $v$ and $d$ linearly independent vectors
$b_1,\ldots,b_d$ such that the set of vectors $x$ with $Ax=c$ equals  $\{v +
\operatorname{span}(b_1,\ldots, b_d)\}$.

\bigskip
Gaussian elimination runs in polynomial time and works over
finite fields.

\subsection{Step P3: Computing the Contributions}

We return to the value $f(X)$ in \eqref{eq: f def} and show how to
compute it modulo 2 in polynomial time. The salient feature of 
\eqref{eq: f def} is that modulo 2, we can now
view $f(X)$ as the number of solutions to an equation system over GF(2).

\medskip
We define the binary relation $\prec$ on the subsets of $V$ as follows.
Identify $V$ with $\{1,\ldots, n\}$ and set $X\prec Y$ if $\min X <
\min Y$, with the usual convention $\min \emptyset =
\infty$.
(By the discussion following \eqref{eq: prec introduced}, the case $X=Y=\emptyset$ never arises.)

We will set up a system of linear equations in $y_1,\ldots,y_n$, so
that $f(X)$ equals the number of solutions (mod 2).
We use the obvious correspondence
\[ y_i =\begin{cases}
1\,,&\text{if $i\in Y$}\,;\\
0\,,&\text{if $i\notin Y$}\,.\\
\end{cases}
\]
First, since $Y$ belongs to the complement of $X$ we introduce the equations
\begin{equation}\label{eq: complement}
  y_i = 0\qquad \text{for } i\in X\,.
\end{equation}
Second, to ensure $X\prec Y$ we introduce the equations
\begin{equation}\label{eq: order}
 y_i = 0,\qquad \text{for }i=1,\ldots, \min X\,.
\end{equation}

Finally, a subset $Y$ with $X\prec Y$ contributes to $f(X)$ if all
product terms in \eqref{eq: f def} are
$1$, so we need 
\begin{align*}
  \degree_i(Y) &= 1 \pmod 2,\qquad \text{if } i\in Y\, \\
\degree_i(X\cup Y) &= 1 \pmod 2, \qquad \text{if } i\notin Y\,
\end{align*}
for each $i\notin X$. 
Since  $\degree_i(X\cup Y) = \degree_i(X) + \degree_i(Y)$ we can rewrite the second
constraint to 
\[
 \degree_i(Y) = 1 + \degree_i(X)  \pmod 2, \qquad \text{if } i\notin Y\,.
\]
In terms of the variables $y_1,\ldots, y_n$ and the adjacency matrix
$A$ of $G$ we have the equations
\[
\sum_{j=1}^n a_{ij} y_j = 
\begin{cases}
1\,,&\text{if $y_i= 1$}\,;\\
1+\degree_i(X)\,,&\text{if $y_i= 0$}\,;
\end{cases}
\]
for each $i\notin X$.
(Note that $\degree_i(X)$ is a fixed value depending only on $X$.)
The right hand side simplifies to $1+\degree_i(X) + \degree_i(X) y_i$, so the
resulting equations are
\begin{equation}\label{eq: neighbourhood}
\degree_i(X)y_i + \sum_{j=1}^n a_{ij} y_j = 1+\degree_i(X),\qquad \text{for }i\notin X\,.
\end{equation}

In summary, the solutions to the linear equations in \eqref{eq: complement},
\eqref{eq: order}, and \eqref{eq: neighbourhood} describe exactly the
subsets $Y\subseteq V\setminus X$ with $X\prec Y$ that contribute 1 to
$f(X)$.
The number of such solutions is $2^d$ with $d$ the dimension of the system's null space.
It is odd if and only if the system of equations
has a unique solution (and consequently $d=0$).
This is determined by Gaussian elimination in time polynomial in
$n$.

\section{Bipartite Input Graphs}
\label{sec: bip}

We first describe an algorithm for the step P2 for the case that the input graph is
bipartite.
We can safely assume that the input bipartite graph is balanced on two
equal sized vertex sets $V_1\cup V_2=V$.
(Otherwise, there can be no Hamiltonian cycles.)
 
Let 
\[
B= \begin{bmatrix}0 & B_{12} \\B_{21} & 0 \end{bmatrix} 
\]
denote the adjacency matrix of the input graph.
Choose random values $r_1,\ldots,r_n\allowbreak\in\{0,1\}$ uniformly and
independently and construct the diagonal matrix $R=\diag(r_1,\allowbreak\ldots,\allowbreak r_n)$. 
Set $A=B+R$.
The equation system \eqref{eq: quadratic eqs} becomes 
\[
\mathbf x \circ(B+R)\mathbf x= 
\mathbf x\circ
\begin{bmatrix}0 & B_{12} \\ B_{21} &  0\end{bmatrix} 
\mathbf x + \mathbf x\circ R\mathbf x = 
\mathbf x\,,
\]
equivalently,
\begin{align*}
\mathbf x_1\circ B_{12} \mathbf x_2 &= \mathbf x_1-\mathbf x_1\circ R_1 \mathbf x_1\\
\mathbf x_2\circ B_{21} \mathbf x_1 &= \mathbf x_2- \mathbf x_2\circ R_2 \mathbf x_2\,,
\end{align*}
where $\mathbf x =(\mathbf x_1, \mathbf x_2)$ and $R=
\diag(R_1,R_2)$.
In this formulation it is apparent that for fixed $\mathbf x_1$, the
equation system (in the variables $\mathbf x_2$) is linear; note in
particular that $\mathbf x_2\circ R_2 \mathbf x_2 =R_2
\mathbf x_2$ because $R_2$ is diagonal and multiplication in GF(2) is
idempotent.
Thus, the system can be solved by Gaussian elimination in polynomial
time.

\bigskip
\noindent {\bf Algorithm B} (\emph{Bipartite graphs}.) Given a bipartite
directed graph $G=(V,E)$ with $n=|V|$, lists the vertex subsets $X\subseteq V$
for which $\degree_x(X)$ is odd for all $x\in X$.
\begin{description}\setlength\itemsep{0pt}
\item[B1] [Add random self loops.] Choose $r_1,\ldots, r_n\in \{0,1\}$
  uniformly at random and add a self-loop at vertex $v_i$ if $r_i=1$.
\item[B2] [Initialise.] Set $x_1=\cdots =x_{n/2}=0$.
\item[B3] [Solve.] Solve the system $\mathbf x \circ A\mathbf x=\mathbf x$
  in the free variables $x_{n/2+1},\ldots, x_n$ using algorithm {\bf G}.
  If it returns ``none'' proceed to B5.
\item[B4] [Report.] (Algorithm {\bf G} returned the solution set as
  $v,b_1,\ldots, b_d$.)
  Output the vectors $v+\alpha_1b_1+\cdots
  +\alpha_db_d$ for all $2^d$ choices of $\alpha_1,\cdots,\alpha_d\in\{0,1\}$.
\item[B5] [Next.]
  If all choices of $(x_1,\dots,x_{n/2})$ have been inspected,
  terminate.
  Otherwise generate the next choice and return to B3.
\end{description}

The number of choices of $\mathbf x_1$ is $2^{n/2}$, so step B3 takes
total time $2^{n/2}\poly(n)$.
This is dominated by the overall running time of step B4, which is
linear in the number of of contributing subsets $X$.
These are $1.5^n$ in expectation according to lemma \ref{lem: expected
size}.
Hence the runtime is $1.5^n\poly(n)$ in expectation.
This finishes the proof of theorem~\ref{thm: bip}.
\section{General Input Graphs}
\label{sec: genalg}

For general graphs, we do not know how to list the solutions $S$ to
\eqref{eq: quadratic eqs} in step P2 in time proportional to the solution set.
Instead, we will efficiently list a superset $S'$ of the solution set
of size $O(1.619^n)$.
We can then examine every $\mathbf x\in S'$ to see that it solves
\eqref{eq: quadratic eqs}.

The superset $S'$ of candidate solutions to \eqref{eq: quadratic eqs}
is defined as follows.
Let $\mathbf x'$ denote a \emph{prefix} vector of fixed values $\mathbf
x'=(x_1,\dots,x_{n-k})\in\{0,1\}^{n-k}$ and introduce the variables
$x_{n-k+1},\ldots,x_n$.
Set $\mathbf x =(x_1,\ldots, x_n)$.
Consider the equation system
\begin{equation}\label{eq: larger system}
(\mathbf x', \mathbf 0_k)\circ A \mathbf x
= (\mathbf x',\mathbf 0_k)\,.
\end{equation}
For fixed $\mathbf x'$, this is nonhomogeneous set of linear equations in the variables
$x_{n-k+1},\allowbreak \ldots, x_n$.
Every prefix of a solution to \eqref{eq: quadratic eqs} is a solution to
\eqref{eq: larger system}:

\begin{lem}
  Let $\mathbf x=(x_1,\ldots,x_n)\in\{0,1\}^n$ and $\mathbf
  x'=(x_1,\ldots,x_{n-k})$ for some $k\in\{0,1,\ldots, n\}$.
  If $\mathbf x$ satisfies $\mathbf x\circ A \mathbf x=\mathbf x$
then $(\mathbf x',\mathbf 0_k)\circ A \mathbf x=(\mathbf x',\mathbf 0_k)$.
\end{lem}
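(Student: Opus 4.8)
The plan is to verify the claimed identity coordinate by coordinate, which reduces the statement to two easy cases. Write $\mathbf{w}=(\mathbf{x}',\mathbf{0}_k)$, so that $w_i=x_i$ for $i\le n-k$ and $w_i=0$ for $i>n-k$. The crucial observation I would stress is that the matrix--vector product $A\mathbf{x}$ appearing on the left-hand side of the conclusion is taken with respect to the \emph{full} vector $\mathbf{x}$, exactly as in the hypothesis; truncating to $\mathbf{w}$ only alters the coordinate-wise Schur factor, not the quantity $(A\mathbf{x})_i$ being multiplied. Hence the $i$th coordinate of $\mathbf{w}\circ A\mathbf{x}$ is simply $w_i\,(A\mathbf{x})_i$, and the task becomes checking that this equals $w_i$ for every $i$.

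Next I would dispatch the two index ranges. For $i>n-k$ we have $w_i=0$, so both sides of the $i$th coordinate equation are $0$ regardless of the value of $(A\mathbf{x})_i$, and the equation holds trivially. For $i\le n-k$ we have $w_i=x_i$, so the $i$th coordinate equation reads $x_i\,(A\mathbf{x})_i=x_i$; but this is precisely the $i$th coordinate of the hypothesis $\mathbf{x}\circ A\mathbf{x}=\mathbf{x}$, which holds by assumption. Combining the two ranges yields $\mathbf{w}\circ A\mathbf{x}=\mathbf{w}$, which is the desired conclusion.

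I do not expect any genuine obstacle here: the argument is a direct coordinatewise check over GF(2). The one point that needs care — and the only thing I would flag — is keeping the two occurrences of $\mathbf{x}$ straight, namely that the Schur factor on the left is truncated to $\mathbf{w}$ while the argument of $A$ remains the full solution $\mathbf{x}$. Once that distinction is noted, the first $n-k$ equations survive unchanged from the hypothesis and the last $k$ equations collapse to $0=0$, so the lemma follows immediately.
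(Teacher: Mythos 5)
Your proof is correct and follows essentially the same route as the paper's: a coordinatewise check observing that for $i\leq n-k$ the $i$th equation coincides with the corresponding equation of the hypothesis $\mathbf x\circ A\mathbf x=\mathbf x$, while for $i>n-k$ it degenerates to $0=0$. Your explicit flagging of the distinction between the truncated Schur factor and the full vector $\mathbf x$ inside $A\mathbf x$ is a helpful clarification but does not change the argument.
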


\begin{proof}
  For $1\leq i\leq n-k$, the $i$th equation of both $ \mathbf x\circ
  A\mathbf x=\mathbf x$ and $(\mathbf x',\mathbf 0_k)\circ A \mathbf
  x=(\mathbf x',\mathbf 0_k)$ is
  \[
  x_i \sum_{j=1}^n a_{ij}x_j = x_i \,.
  \]
  For $i>n-k$, the $i$th equation of $(\mathbf x',\mathbf
  0_k)\circ A \mathbf x=(\mathbf x',\mathbf 0_k)$ simplifies to $0=0$,
  which is trivially satisfied.
\end{proof}

We will show that we can avoid generating all prefixes.

Let $\binom{[n-k]}{k}$ denote the family of vectors $(x_1,\ldots,
x_{n-k})\in \{0,1\}^{n-k}$ for which $x_1+\cdots+x_{n-k} = k$.
Let $\binom{[n-k-1]}{k}_1$ denote the family of vectors $(x_1,\ldots,
x_{n-k})\in \{0,1\}^{n-k}$ for which $x_1+\cdots+x_{n-k-1} = k$
and $x_{n-k}=1$.
The notation is motivated by the cardinalities,
\[
\biggl|\binom{[n-k]}{k}\biggr| = \binom{n-k}{k}\qquad\text{and}\qquad
\biggl|\binom{[n-k-1]}{k}_1\biggr| = \binom{n-k-1}{k}\,.
\]

\begin{lem}
  Let $\mathbf x\in\{0,1\}^n$.
  Then there is exactly one $k\in \{0,\ldots, \lfloor n/2\rfloor\}$ such
  that the $(n-k)$th prefix $\mathbf x'=(x_1,\ldots,x_{n-k})$ belongs to
  $\binom{[n-k]}{k}$  or $\binom{[n-k-1]}{k}_1$.
\end{lem}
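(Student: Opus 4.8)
The plan is to reduce both membership conditions to a single monotone integer quantity tracking the prefix weight, and then observe that this quantity crosses a critical level exactly once. We may assume $n\geq 1$, the case $n=0$ being trivial. For a fixed $\mathbf x$, define the prefix sums $w_m = x_1+\cdots+x_m$ (with $w_0=0$) and set $h(m)=w_m+m$ for $m=0,1,\ldots,n$. The structural facts I would record first are that $h(0)=0$, that $h(n)=w_n+n\geq n$, and that $h(m)-h(m-1)=x_m+1\in\{1,2\}$, so $h$ is strictly increasing in steps of size $1$ or $2$.

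Next I would rewrite the two membership conditions in terms of $h$. Writing $m=n-k$ for the prefix length, membership of $\mathbf x'$ in $\binom{[n-k]}{k}$ reads $w_{n-k}=k$, i.e. $h(m)=n$; membership in $\binom{[n-k-1]}{k}_1$ reads $w_{n-k-1}=k$ together with $x_{n-k}=1$, which translates to $h(m-1)=n-1$ and a step of size $2$, i.e. $h(m)=n+1$ with $h(m-1)=n-1$. Thus each candidate value $k$ corresponds to the prefix index $m=n-k$, and ``the $(n-k)$th prefix lies in one of the two families'' becomes ``$h$ either lands on the level $n$ at step $m$, or jumps from $n-1$ to $n+1$ across it at step $m$.''

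Then I would locate the unique crossing. Let $m^\ast$ be the least index with $h(m^\ast)\geq n$; this exists since $h(0)=0<n\leq h(n)$, and $1\leq m^\ast\leq n$. By minimality $h(m^\ast-1)\leq n-1$, and since each step is at most $2$ we get $h(m^\ast)\leq n+1$, so $h(m^\ast)\in\{n,n+1\}$. If $h(m^\ast)=n$ then the first condition holds at $m=m^\ast$; if $h(m^\ast)=n+1$ then the step must equal $2$, forcing $h(m^\ast-1)=n-1$ and $x_{m^\ast}=1$, so the second condition holds. In either case $k=n-m^\ast$. For the range, $h(m^\ast)\geq n$ together with $w_{m^\ast}\leq m^\ast$ gives $2m^\ast\geq h(m^\ast)\geq n$, hence $m^\ast\geq\lceil n/2\rceil$ and $k=n-m^\ast\leq\lfloor n/2\rfloor$; combined with $m^\ast\leq n$ this places $k$ in the required set. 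For uniqueness, strict monotonicity yields $h(m)<n$ for every $m<m^\ast$ (so both conditions fail there), and $h(m)\geq n+1$ with $h(m-1)\geq n$ for every $m>m^\ast$ (so the first condition fails and the second fails because $h(m-1)\neq n-1$). Hence $m^\ast$, and therefore $k$, is the unique choice.

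The only real subtlety is that the step size of $h$ is exactly $1$ or $2$: this dichotomy is precisely what splits the conclusion into the two families, with $\binom{[n-k]}{k}$ recording an exact landing on the level $n$ and $\binom{[n-k-1]}{k}_1$ recording a jump across it. Once this is set up, everything else is routine bookkeeping with a monotone integer sequence, so I do not anticipate any genuine obstacle beyond stating the two cases cleanly.
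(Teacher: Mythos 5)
Your proof is correct and takes essentially the same approach as the paper: the paper tracks the pair $(w_k,k)$ and argues that the decreasing value $w_k$ and increasing value $k$ must either meet (giving case (\emph{i})) or pass each other, forcing $x_{n-k}=1$ (case (\emph{ii})), which is precisely your dichotomy of $h(m)=w_m+m$ landing on level $n$ versus jumping over it with a step of size $2$. Your single strictly increasing potential is a slightly tidier packaging --- notably, it replaces the paper's case analysis for uniqueness (``the remaining three cases follow from monotonicity'') with one uniform monotonicity observation --- but the underlying argument is the same.
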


\begin{proof}
Let $w_k= x_1+\cdots+x_{n-k}$.
Choose the smallest $k\geq 0$ with the property
\begin{enumerate}
\item[({\it i}\/)] $w_k=k$, or
\item[({\it ii}\/)] $w_k=k+1$, $x_{n-k}=1$.
\end{enumerate}
To see that such $k$ exists consider the pairs of values $(w_k,k)$ for
$k=0,1,\ldots$.
The first pair is $(x_1+\cdots+x_n,0)$.
The left value decreases monotonically by $0$ or $1$ down to $0$.
The right value increases $0,1,2,\ldots$.
The only possibility that ({\it i}) lacks a solutions is that $w_k$
and $k$ pass each other, i.e., there exists $k$ such that
\[
w_{k}=k+1, w_{k+1}=k\,.
\]
But in that case $x_{n-k}=1$ so that ({\it ii}) is satisfied.
Note also that if ({\it ii}) holds for $k$, then ({\it i}) does not
for that $k$, and \emph{vice versa}.
Also note that $k$ is at most $\lfloor n/2\rfloor$,
since $w_k$ is at most $n-k$.

We have that either ({\it i}) is satisfied and $\mathbf x'$ belongs to
$\binom{[n-k]}{k}$, or ({\it ii}) is satisfied and $\mathbf x'$ belongs to
$\binom{[n-k-1]}{k}_1$.

\medskip We claim that there cannot be another solution $k'>k$ to
either ({\it i}) or ({\it ii}).
Consider first the case that $k'$ satisfies ({\it i}) and $k$
satisfies ({\it ii}).
In particular, $k+1=w_k=w_{k-1} +x_{n-k} = w_{k-1} + 1$, so
$k=w_{k-1}$.
Then,
\[
k'=w_{k'}\leq w_{k-1} = k\,,
\]
a contradiction.
The remaining three cases follow from the monotonicity of the left and right hand sides.
\end{proof}

\begin{figure}[bt]
\input{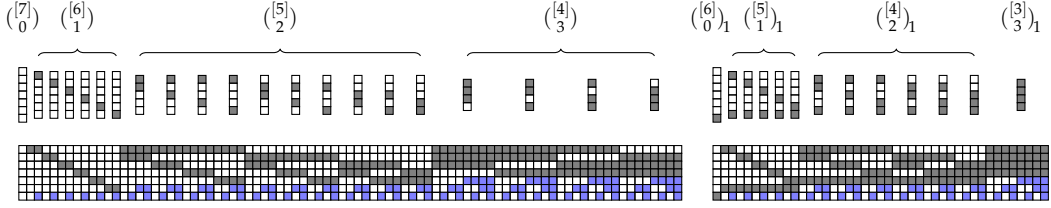}
\caption{\label{fig: bits}The sets $\binom{[n-k]}{k}$ and  $\binom{[n-k-1]}{k}_1$ for
  $n=7$ and $k\in \{0,\ldots, \lfloor n/2\rfloor\}$. The bottom row shows
  all 128 bit patterns on 7 bits.}
\end{figure}

We note in passing that the previous lemma gives a combinatorial proof
of the identity
\[ 
\sum_k
\biggl( \binom{n-k}{k} + \binom{n-k-1}{k}\biggr ) 2^k = 2^n\,,
\]
see figure~\ref{fig: bits}.
For another way to verify this expression, compute
\[ \sum_k \binom{N-k}{k} 2^k = \sfrac{1}{3}(2^{N+1}+(-1)^{N+1})\,, \]
which comes from evaluating a known, closed form for the generating
function $\sum_k\binom{N-k}{k} z^k$ at $z=2$ \cite[exercise
1.2.9.15]{Knuth}.
Then add these values for $N=n$ and $N=n-1$ to arrive at
$\sfrac{1}{3}\bigl(2^{n+1} + (-1)^{n+1} + 2^n + (-1)^n\bigr) =
\frac{1}{3}(2+1)2^n=2^n$.

\medskip
Finally, we need to bound the number of solutions to \eqref{eq: larger
  system}.
\emph{A priori}, this is a system in $k$ variables each ranging over
$\{0,1\}$, so the number of solutions can be $2^k$.
However, we will show that in expectation, the number of solutions is
exactly 1:

\begin{lem}\label{lem: expected no sols}
  Let $\mathbf x'$ denote a vector of fixed values $\mathbf x' = (x_1,
  \ldots, x_{n-k})$ from $\binom{[n-k]}{k}$ or $\binom{[n-k-1]}{k}_1$ and
  introduce the variables $x_{n-k+1},\ldots,x_n\in \{0,1\}$.
  Set $\mathbf x=(x_1\ldots,x_n)$.
  Let $A$ be a matrix whose diagonal entries are chosen uniformly at
  random from $\{0,1\}$.
  Then the expected number of solutions to \[(\mathbf x',\mathbf
  0_k) \circ A \mathbf x=(\mathbf x',\mathbf 0_k) \] is $1$.
\end{lem}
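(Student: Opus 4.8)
The plan is to fix the prefix $\mathbf{x}'$, substitute it into the system, and reduce the whole thing to counting solutions of an inhomogeneous linear system over $\mathrm{GF}(2)$ whose right-hand side is random. First I would read off which of the $n$ equations are nontrivial. Exactly as in the preceding lemma, the $i$th equation of $(\mathbf{x}',\mathbf{0}_k)\circ A\mathbf{x}=(\mathbf{x}',\mathbf{0}_k)$ collapses to $0=0$ whenever $i>n-k$ or $x_i=0$, and becomes $\sum_{j=1}^n a_{ij}x_j = 1$ precisely when $i\le n-k$ and $x_i=1$. Hence there is exactly one genuine constraint for each position in the support $S=\{\,i\le n-k : x_i=1\,\}$ of the prefix, and each such constraint is a linear equation in the $k$ suffix variables $x_{n-k+1},\ldots,x_n$.

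The crux is to isolate the randomness. For $i\in S$ I would split $\sum_j a_{ij}x_j = a_{ii}x_i + \sum_{j\ne i} a_{ij}x_j = a_{ii} + \sum_{j\ne i} a_{ij}x_j$, using $x_i=1$. Every coefficient $a_{ij}$ with $j\ne i$ is off-diagonal and therefore a fixed input bit; in particular, since each suffix index $j>n-k$ strictly exceeds each constraint index $i\le n-k$, no coefficient multiplying a \emph{variable} is ever a diagonal entry. So the entire dependence of equation $i$ on the random diagonal sits in the single additive term $a_{ii}$. Rewriting, equation $i$ reads $\sum_{j>n-k} a_{ij}x_j = 1 + a_{ii} + (\text{fixed})$: a deterministic left-hand side equated to a fair coin, and these coins are mutually independent across $i\in S$ because the $a_{ii}$ are distinct diagonal entries.

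With the randomness localized in this way, I would finish by the same per-assignment argument used for Lemma~\ref{lem: expected size}. Fix any suffix assignment $\mathbf{x}''\in\{0,1\}^k$. For each $i\in S$ the constraint pins down the value of $a_{ii}$ uniquely, so $\mathbf{x}''$ satisfies that equation with probability exactly $\tfrac12$; by independence it satisfies all $|S|$ constraints simultaneously with probability $2^{-|S|}$. Linearity of expectation over the $2^k$ choices of $\mathbf{x}''$ then yields $E[\#\text{solutions}] = 2^{k}\cdot 2^{-|S|} = 2^{\,k-|S|}$. A prefix in $\binom{[n-k]}{k}$ has exactly $|S|=k$ ones, so this is $2^0=1$, as claimed, and the family $\binom{[n-k-1]}{k}_1$ is treated by the identical computation with its own support count substituted for $|S|$.

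The hard part is really just the independence-and-uniformity claim of the second paragraph: one must be certain that the random diagonal does not leak into the coefficient matrix, for otherwise the events ``equation $i$ holds'' would be correlated and the clean factor $\tfrac12$ would fail. That is exactly what the index inequality $i\le n-k<j$ secures, and it is the structural reason the construction puts the free variables in the suffix. Everything else is routine bookkeeping of which equations survive.
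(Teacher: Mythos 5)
Your proof follows the paper's argument essentially step for step: you discard the trivialised equations, move the diagonal entry $a_{ii}=r_i$ to the right-hand side as an independent fair coin, observe that the coefficients multiplying the free variables are off-diagonal (hence fixed) because $i\le n-k<j$, and conclude by independence and linearity of expectation that the expected count is $2^{k-|S|}$, where $S$ is the support of the prefix. Your explicit justification that the randomness cannot leak into the coefficient matrix is, if anything, slightly more careful than the paper's proof, which leaves that point implicit.

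One point deserves flagging, and it is a loose end you inherit from --- indeed expose more honestly than --- the paper's own proof. For a prefix in $\binom{[n-k-1]}{k}_1$ the support has size $k+1$, not $k$: the definition demands $x_1+\cdots+x_{n-k-1}=k$ \emph{and} $x_{n-k}=1$. So your (correct) formula $2^{k-|S|}$ evaluates to $\sfrac{1}{2}$ for that family, not the $1$ claimed by the lemma, and your closing remark that this case is ``treated by the identical computation with its own support count'' quietly contradicts the statement rather than confirming it. The paper's proof glosses over the same issue by asserting ``there are exactly $k$ entries $x_i=1$,'' which holds only for $\binom{[n-k]}{k}$. The discrepancy is harmless downstream --- the running-time analysis only needs the expected number of solutions per prefix to be $O(1)$, and $\sfrac{1}{2}\le 1$ --- but a complete write-up should either weaken the lemma to ``at most $1$'' or record the two cases with their separate constants $1$ and $\sfrac{1}{2}$, rather than pass the second family off ``as claimed.''
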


\begin{proof}
  Set $r_i=a_{ii}$.
  For $0\leq i\leq n-k$, the $i$th equation has the form
  \[
  x_i  \sum_{j=1}^n a_{ij}x_j = x_i\,.
  \]
  Since there are exactly $k$ entries $x_i=1$, there are exactly
  $k$ of such equations that do not trivialise to $0=0$.
  We can rewrite each of these equations to
  \[
  \sum_{j=1}^n a_{ij}x_j = 1
  \]
  or, isolating the variables on the left hand side,
  \[
  \sum_{j=n-k+1}^n a_{ij} x_j = 1 + r_i + \sum_{\substack{j=1\\j\neq i}}^{n-k} a_{ij}x_j \,.
  \]
  Now it is clear that the system consists of $k$ equations in the $k$
  variables $x_{n-k+1},\allowbreak\ldots,\allowbreak x_n$.
  The right hand side is uniformly distributed from $\{0,1\}$, so an
  assignment to the variables satisfies an equation with probability
  $\frac{1}{2}$.
  By independence of the choices of $r_i$, all $k$ equations are
  satisfied with probability $2^{-k}$.
  There are $2^k$ assignments, so the expected number of solutions
  is $2^k2^{-k}= 1$.
  \end{proof}

\bigskip\noindent
{\bf Algorithm C} (\emph{Contributing subsets for general graphs.})
Given a graph $G=(V,E)$ with $n=|V|$, lists the vertex subsets
$X\subseteq V$ for which $\degree_x(X)$ is odd for all $x\in X$.
\begin{description}\setlength\itemsep{0pt}
\item[C1] [Add random self loops.] Choose $r_1,\ldots, r_n\in \{0,1\}$
  uniformly at random and add a self-loop at vertex $v_i$ if $r_i=1$.
  Let $A$ denote the adjacency matrix of the resulting graph.
\item[C2] [Initialize.] Set $k=0$, $\mathbf x'=\mathbf 0_n$.
\item[C3] [Solve.]
  Solve the nonhomogeneous linear equation \eqref{eq: larger system}
  in the free variables $x_{n-k+1},\ldots, x_n$ using algorithm~{\bf
    G}.
  If there is no solution, proceed to C5.
\item[C4] [Filter solution.]
  Generate every vector $\mathbf x=v+\alpha_1b_1+\cdots +\alpha_db_d$ for all
  $2^d$ choices of $\alpha_1,\cdots,\alpha_d\in\{0,1\}$.
  If $\mathbf x$ solves the equation $\mathbf x\circ A\mathbf
  x=\mathbf x$, then output $\mathbf x$.
\item[C5] [Next prefix.]
  Update $\mathbf x'$ and possibly $k$ so that $\mathbf x'$ is the
  next element of $\binom{[n-k]}{k}$ or $\binom{[n-k-1]}{k}_1$ for
  $k=0,\ldots,\sfrac{1}{2}n$, and return to C3.
  (When all these elements have been generated, the algorithm terminates.)
\end{description}

\begin{lem}
The running time of algorithm C is within a polynomial factor
of \begin{equation}\label{eq: fibo}
\biggl(\frac{1+\sqrt 5}{2}\biggr) ^n
\end{equation}
in expectation.
\end{lem}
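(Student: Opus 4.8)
The plan is to charge the running time to two quantities: the number of prefixes $\mathbf x'$ inspected by the main loop of algorithm~C, and the total number of candidate vectors enumerated in step~C4 over the course of the whole computation. Every call to algorithm~G in step~C3 runs in time $\poly(n)$, and every candidate produced in C4 is tested against \eqref{eq: quadratic eqs} in time $\poly(n)$, so the total running time is within a $\poly(n)$ factor of the sum of these two counts. It therefore suffices to show that both counts are $O(F_n)$ in expectation.

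First I would count the prefixes. Step~C5 ranges over every element of $\binom{[n-k]}{k}$ and every element of $\binom{[n-k-1]}{k}_1$ for $k=0,\ldots,\lfloor n/2\rfloor$, so the number of prefixes is
\[
\sum_{k}\binom{n-k}{k}+\sum_{k}\binom{n-k-1}{k}\,.
\]
By the classical identity $\sum_{k}\binom{m-k}{k}=F_{m+1}$, this equals $F_{n+1}+F_n=F_{n+2}$, which is within a constant factor of \eqref{eq: fibo}. This already accounts for the cost of the $\poly(n)$-time Gaussian eliminations in step~C3, and it is a deterministic count, independent of the random choices made in C1.

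Next I would bound the number of candidates enumerated in C4. For a fixed prefix $\mathbf x'$, step~C4 lists exactly the solutions of the linear system \eqref{eq: larger system}: there are $2^d$ of them when the system is consistent and none otherwise, where $d$ is the dimension of its null space. In the worst case $d$ may be as large as $k$, so a single prefix could produce up to $2^k$ candidates, and a worst-case bound on C4 would be hopeless. The key is that lemma~\ref{lem: expected no sols} applies to precisely the prefixes in $\binom{[n-k]}{k}$ and $\binom{[n-k-1]}{k}_1$ and shows that, over the random diagonal chosen in step~C1, the expected number of solutions is at most $1$ for each such prefix. Since the set of prefixes is fixed and does not depend on the random diagonal, linearity of expectation gives that the expected \emph{total} number of candidates, summed over all prefixes, is at most the number of prefixes, namely $F_{n+2}=O(F_n)$.

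Combining the two estimates, the expected running time is within a $\poly(n)$ factor of $F_n$, and since $F_n\in O(1.619^n)$ this is the claimed bound. I expect the third step to be the main obstacle: the deterministic per-prefix bound of $2^k$ is far too large, so the analysis genuinely depends on averaging the solution count over the random self-loops rather than on any structural property of a single system, and it relies on the exact match between the prefix families generated in C5 and the hypotheses of lemma~\ref{lem: expected no sols}.
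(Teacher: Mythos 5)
Your proof is correct and follows essentially the same route as the paper: a deterministic Fibonacci count $F_{n+1}+F_n=F_{n+2}$ of the prefixes via $\sum_k\binom{m-k}{k}=F_{m+1}$ to cover the cost of step C3, and lemma~\ref{lem: expected no sols} together with linearity of expectation (over the random diagonal, with the prefix set fixed) to bound the total expected work in step C4 by the number of prefixes. Your remark that the naive per-prefix bound $2^k$ would give $\sum_k\binom{n-k}{k}2^k>2^{n-1}$ and is therefore useless is exactly the observation the paper makes as well.
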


\begin{proof}
  For the running time, first observe that the members of
  $\binom{[n-k]}{k}$ and $\binom{[n-k-1]}{k}_1$ can be listed with
  polynomial (in fact, constant) delay \cite[sec. 7.2.1.3]{Knuth}.

  The number of iterations is given by the sizes of $\binom{[n-k]}{k}$
  and $\binom{[n-k-1]}{k}_1$, summed over all $k=0,\ldots,\lfloor
  n/2\rfloor$.
  These are well-studied quantities, and it is known
  \cite[ex. 1.2.8.16]{Knuth} that
  \[
  \sum_k \binom{n-k}{k} +
  \sum_k \binom{n-k-1}{k} = F_{n+1} + F_n = F_{n+2}\,,\]
  where $F_n$ is the $n$th Fibonacci number, bounded by
  \eqref{eq: fibo}.

  The running time is dominated by the time spent in steps C3 and C4.
  Step C3 takes polynomial time every time it is executed thanks to
  the polynomial running time of algorithm~G.
  Step C4 looks more difficult to bound, because it may exhaust the
  entire solution space to an equation system in $k$ unknowns.
  Naively, this would lead to a total running time of 
  \[\sum_k\binom{n-k}{k}2^k>2^{n-1}\,.\]
  However, by lemma~\ref{lem: expected no sols}, the expected size of
  this solution space is 1.
  Thus, we expect to generate and verify only one vector $\mathbf x$,
  so the time spent in C4 is polynomial in expectation in each
  iteration.
  By linearity of expectation, the total time spent in C4 is within a
  polynomial factor of \eqref{eq: fibo}.
\end{proof}

\subsection{Derandomization}

We use the method of conditional expectations to derandomize
algorithm~C.
Our randomness is over the choices of $r_1,\ldots, r_n$, the self-loop
indicators.
The expected number of candidate solutions $S$ given a specific choice
of values for the first $k$ choices satisfies
\[
\mathbf E [\,S\mid (r_1,\ldots ,r_k)\,]=\sfrac{1}{2}\mathbf E[\,S\mid
(r_1,\ldots, r_k,0)\,]+\sfrac{1}{2}\mathbf E[\,S\mid (r_1,\ldots, r_k, 1)\,]
\]
for all $k\geq 0$.
In particular, there is a specific assignment to $r_1,\ldots,r_n$ such
that $ \mathbf E[\,S\mid (r_1,\ldots,r_n)\,] \leq \mathbf E[S]$.
Thus, by mimicking the behaviour of algorithm~C, but computing the
expected solution sizes of the systems \eqref{eq: larger system}
instead of solving them, we can choose the assignment greedily one
variable at the time, until we get rid of all the randomness.
The final assignment has no more candidate solutions than the original
expected number.

The next lemma shows that we can efficiently compute the expectations
$\mathbf E[\,S\mid (r_1,\ldots, r_k,0)\,]$ and $\mathbf E[\,S\mid (r_1,\ldots, r_k,1)\,]$.

\begin{lem}
  Let $\mathbf x'$ and $\mathbf x$ be an in lemma~\ref{lem: expected
    no sols}. Let $l\in\{0,\ldots,n\}$.
  Let $A$ be a matrix whose diagonal entries $a_{ii}=r_i$ for $l<i\leq
  n$ are chosen uniformly at random from $\{0,1\}$.
  Then the expected number of solutions to \[(\mathbf x',\mathbf
  0_k)\circ A \mathbf x=(\mathbf x',\mathbf 0_k) \] is given by a
  polynomial size linear equation system in the variables
  $x_{n-k+1},\ldots,x_n$ and $r_{l+1},\ldots, r_n$.
\end{lem}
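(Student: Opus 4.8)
The plan is to turn the expectation into the solution count of a single linear system over GF(2), obtained by promoting the still-random self-loop bits to formal variables. First I would invoke the reduction already established in lemma~\ref{lem: expected no sols}: for the fixed prefix $\mathbf x'$, which carries exactly $k$ ones, the system $(\mathbf x',\mathbf 0_k)\circ A\mathbf x=(\mathbf x',\mathbf 0_k)$ collapses to exactly $k$ nontrivial equations, one for each index $i\le n-k$ with $x_i=1$, namely
\[
\sum_{j=n-k+1}^n a_{ij}x_j \;=\; 1+r_i+\sum_{\substack{j=1\\ j\neq i}}^{n-k} a_{ij}x_j\,,
\]
where $r_i=a_{ii}$. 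All the randomness sits on the diagonal; by hypothesis $r_1,\ldots,r_l$ are fixed while $r_{l+1},\ldots,r_n$ are uniform and independent.

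Next I would observe that this is \emph{still} a linear system if the undetermined bits $r_{l+1},\ldots,r_n$ are treated as unknowns rather than as random constants. Each off-diagonal coefficient $a_{ij}$ is a fixed bit, and each $r_i$ enters its single equation with coefficient $1$, so the $k$ equations form a nonhomogeneous linear system $L$ over GF(2) in the combined variable set $\{x_{n-k+1},\ldots,x_n\}\cup\{r_{l+1},\ldots,r_n\}$. This system has at most $k$ equations and fewer than $k+n$ variables, hence is of polynomial size. Note that $r_i$ actually occurs in $L$ only when $x_i=1$ and $l<i\le n-k$; the remaining free $r_i$ are unconstrained, which is exactly what will account for the normalising factor below.

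The final step is an averaging (double-counting) identity. For each of the $2^{\,n-l}$ equally likely assignments to $(r_{l+1},\ldots,r_n)$ the original system has some number of $\mathbf x$-solutions, and the sought expectation is the mean of these counts. Summing the $\mathbf x$-solution count over all assignments to the free $r$'s is precisely the number of pairs $(\mathbf x,\mathbf r)$ satisfying $L$, so
\[
\mathbf E\bigl[\#\text{solutions}\bigr]\;=\;\frac{\#\{\text{solutions of }L\}}{2^{\,n-l}}\,.
\]
The right-hand side is computed by a single application of algorithm~G to $L$: if $L$ is inconsistent the count is $0$, and otherwise it equals $2^{d}$ with $d$ the dimension of the null space, obtained in time polynomial in $n$. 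Dividing by $2^{\,n-l}$ then yields the claimed expectation, and the polynomial-size system $L$ is the object promised by the lemma.

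The one place that needs care is the passage from random constants to variables: I must check that linearity is preserved, which holds only because the randomness is confined to the diagonal and each $r_i$ appears, with coefficient $1$, in the single equation indexed by $i$. Once that is in hand, the averaging identity and the polynomial bound on the size of $L$ are routine. Finally, summing the per-prefix expectations over all prefixes $\mathbf x'$ in $\binom{[n-k]}{k}$ and $\binom{[n-k-1]}{k}_1$ delivers $\mathbf E[\,S\mid(r_1,\ldots,r_l,0)\,]$ and $\mathbf E[\,S\mid(r_1,\ldots,r_l,1)\,]$, the two conditional expectations that drive the method of conditional expectations.
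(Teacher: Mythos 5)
Your proof is correct and follows essentially the same route as the paper's: rewrite the $k$ nontrivial equations (those with $x_i=1$), promote the undetermined diagonal bits $r_{l+1},\ldots,r_n$ to GF(2) unknowns alongside $x_{n-k+1},\ldots,x_n$, solve the resulting polynomial-size linear system with algorithm~G, and recover the expectation by dividing the solution count by the number of random assignments. The only divergence is the normalizing constant: you divide by $2^{n-l}$, which is indeed the number of random diagonal bits, whereas the paper writes $S/2^{n-l+1}$ --- your factor is the correct one, and in any case the constant is the same for $b=0$ and $b=1$, so it cancels in the comparison of $N_0$ and $N_1$ that drives the derandomization.
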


\begin{proof}
  The only contributing equations are those for which $x_i=1$, of the form
  \[ 1\cdot \sum_{j=1}^n a_{ij} x_j = 1 \,.\]
  We can rewrite these equations as
  \[
  \sum_{j=n-k+1}^{n} a_{ij}x_j = 1 + \sum_{j=1}^{n-k} a_{ij} x_j
  \]
  for $i\leq l$, and
  \[
r_i+  \sum_{j=n-k+1}^{n} a_{ij}x_j = 1 + 
  \sum_{\substack{j=1\\j\neq i}}^{n-k} a_{ij} x_j
  \]
  for $i> l$.
  Now it is clear that the system consists of $k$ equations in the $k+(n-l)$
  variables $x_{n-k+1},\ldots, x_n$ and $r_{l+1},\ldots, r_n$.

  If this system has $S$ solutions then the expected number of
  solutions to \eqref{eq: larger system} is $S/2^{n-l+1}$.
\end{proof}

\bigskip\noindent
{\bf Algorithm D} (\emph{Derandomization of C.})
Given a graph $G=(V,E)$ with $n=|V|$, determines values  $r_1,\ldots,
r_n\in\{0,1\}$ for algorithm C. Let $A$ denote the adjacency matrix of $G$.
\begin{description}\setlength\itemsep{0pt}
\item[D1] [Initialize.] Set $l=1$.
\item[D2] [Initialize.] Set $k=0$, $\mathbf x'=\mathbf 0_n$, $N_0=N_1=0$.
\item[D3] [Solve for each choice of $r_l$.]
 For $b=0$, tentatively set $a_{ll} = b$ and solve the nonhomogeneous linear
  equation \eqref{eq: larger system} in the free variables
  $x_{n-k+1},\ldots, x_n$ and $r_{l+1},\ldots, r_n$ using
  algorithm~{\bf G}. Let $S_b$ denote the size of the solution
  space. Repeat step D3 for $b=1$.
\item[D4] [Tally solutions.] Increase $N_b$ by $S_b/2^{n-l+1}$ for $b=0,1$.
\item[D5] [Next prefix.]
  Update $\mathbf x'$ and possibly $k$ so that $\mathbf x'$ is the
  next element of $\binom{[n-k]}{k}$ or $\binom{[n-k-1]}{k}_1$ for
  $k=0,\ldots,\sfrac{1}{2}n$, and return to D3.
  (When all these elements have been generated, proceed to D6.)
\item[D6] If $N_0>N_1$ then fix $a_{ll}=1$, otherwise $a_{ll}=0$.
  Increase $l$.
  If $l\leq n$ return to D2.
  Otherwise output $a_{11},\ldots,a_{nn}$.
\end{description}

\section{Proof of Theorem \ref{thm: comb}}
\label{sec: comb}

Let $\scr C\subseteq 2^E$ denote the set of (directed) cycle covers of
the directed graph $G$.
We use Iverson's bracket notation: for a proposition $P$, we write 
\[
[P] = 
\begin{cases} 
  1, &\text{if $P$}\,;\\
  0, &\text{otherwise}\,.
\end{cases}
\]
$G[Y]$ is the graph induced by $Y\subseteq V$.
We sometimes write $e\in G[Y]$ to refer to a directed edge $e$ in the induced
graph $G[Y]$.
We will operate on sums of partitions of vertex sets, and manipulate
these sums. For this purpose, we introduce the notation 
\[\sum_{Y_1,\ldots, Y_K}^V \]
for the sum over all ordered $K$-partitions $Y_1,\ldots, Y_K$ of $V$. That
is, $Y_1\cup\cdots\cup Y_K=V$ and $Y_i\cap Y_j=\emptyset$ ($1\leq
i<j\leq k$). 
Parts may be empty.

For a subset $T\subseteq E$ of directed edges define
\begin{equation*}
  h(T) = \frac{1}{K} \sum_{Y_1,\ldots,Y_K}^V \prod_{e\in T} 
  [ e\in G[Y_1]\cup\cdots\cup G[Y_K]]\,.
\end{equation*} 

We will show that the residue modulo $K$ of the function $h$ serves as
an indicator variable for Hamiltonicity on $\scr C$.

\begin{lem}
  For $C\in \scr C$,
  \begin{equation}
    [C\in \scr H] =    h(C) \pmod K\,.
  \end{equation}
\end{lem}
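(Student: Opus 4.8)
The plan is to compute $h(C)$ exactly and then observe that its residue modulo $K$ is precisely the Hamiltonicity indicator. The whole argument rests on a rigidity property: the product $\prod_{e\in C}[e\in G[Y_1]\cup\cdots\cup G[Y_K]]$ is $1$ exactly when the ordered partition $(Y_1,\ldots,Y_K)$ is monochromatic on each cycle of $C$, and $0$ otherwise.

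First I would unpack the indicator. For a directed edge $e=(u,v)$, the term $[e\in G[Y_1]\cup\cdots\cup G[Y_K]]$ equals $1$ if and only if $u$ and $v$ lie in a common part $Y_i$. Hence $\prod_{e\in C}[e\in G[Y_1]\cup\cdots\cup G[Y_K]]=1$ if and only if every edge of $C$ has both endpoints in the same part. Next I would establish the rigidity claim by tracing each cycle: writing $C$ as a disjoint union of directed cycles and following one of them, $v_1\to v_2\to\cdots\to v_\ell\to v_1$, the condition forces $v_i$ and $v_{i+1}$ to share a part for every $i$, so by transitivity all of $v_1,\ldots,v_\ell$ land in one and the same part. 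Thus a partition contributes $1$ to the sum exactly when each cycle of $C$ is assigned, as a whole, to a single part. This transitivity-along-the-cycle step is the crux of the argument; it is elementary, but everything else is bookkeeping around it.

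With the rigidity claim in hand I would count the contributing partitions. Since $C$ is a cycle cover, its cycles partition $V$, so a monochromatic-per-cycle ordered $K$-partition is the same data as a function from the set of cycles of $C$ to $\{1,\ldots,K\}$, and distinct functions yield distinct ordered partitions (allowing empty parts). If $C$ has $c$ cycles there are therefore exactly $K^{c}$ contributing partitions, and all other terms of the sum vanish, giving
\[
  h(C)=\frac{1}{K}\,K^{c}=K^{c-1}\,,
\]
which is in particular an integer for $c\geq 1$, so the reduction modulo $K$ is well defined.

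Finally I would reduce modulo $K$ according to the number of cycles. When $C$ is Hamiltonian it consists of a single cycle, so $c=1$ and $h(C)=K^{0}=1$; when $C$ is not Hamiltonian it has $c\geq 2$, so $K^{c-1}$ is divisible by $K$ and $h(C)\equiv 0\pmod K$. In either case $h(C)\equiv [C\in\scr H]\pmod K$, which is the assertion of the lemma.
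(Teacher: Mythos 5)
Your proposal is correct and takes essentially the same route as the paper: your rigidity claim (each cycle of the cover must lie entirely within one part, proved by tracing consecutive edges) is exactly the paper's observation that the cycle-induced vertex partition refines $(Y_1,\ldots,Y_K)$, and your count of $K^c$ contributing partitions is the paper's $K^r$. The only difference is presentational --- you derive the single closed formula $h(C)=K^{c-1}$ covering both cases, whereas the paper argues the Hamiltonian case ($c=1$) separately before handling $c>1$; your unified version also makes the integrality of $h(C)$ explicit, which is a minor but pleasant bonus.
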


\begin{proof}
  First, we show that for $C\in \scr H$ we have $h(C)=1 \,\pmod K\,$.
  Consider a Hamiltonian cycle $C$ and a partition $Y_1,\ldots, Y_K$
  of $V$.
  First, every partition that sets $Y_k=V$ for some $k\in \{1,\ldots,
  K\}$ (and all other parts empty) has $e\in G[Y_k]$ for all $e\in
  C$, so its contribution is 1.
  On the other hand, consider a partition where some $Y_k$ ($1\leq
  k\leq K$) is neither $\emptyset$ nor $V$.
  Since $C$ is Hamiltonian, there is a directed edge $uv\in C$ with $u\in Y_k$
  and $v\notin Y_k$.
  This directed edge belongs to none of the $G[Y_k]$, so the product vanishes
  and the partition does not contribute to the sum.
  Thus, the total contribution of $C$ to the sum in $h(C)$ is $K$, and $h(C)=1 \pmod K$.

  \bigskip Second, we show for each $C\notin \scr H$ that $h(C)$ is
  $0\pmod K$.
  In particular, we show
  \[\sum_{Y_1,\ldots,Y_K}^V \prod_{e\in C} 
  [e\in G[Y_1]\cup\cdots\cup G[Y_K]] = 0 \pmod {K^2}\,.\] 
  Partition the non-Hamiltonian cycle cover $C$ into cycles 
  \[C=C_1 \cup \cdots \cup C_r\,,\] 
  such that each $C_i$ is a simple cycle. 
  Note that $r>1$ because $C\notin\scr H$.
  
  Construct the corresponding partition of vertices
  \[V=V_1\cup \ldots\cup V_r\,,\]
  such that $V_i$ are the vertices visited by the cycle $C_i$.
  
  Let $Y_1,\ldots, Y_K$ be a vertex partition such that
  \begin{equation}\label{eq: Y contributes} \prod_{e\in C}
    [ e \in G[Y_1]\cup\cdots\cup G[Y_K]] = 1\,.
  \end{equation}
  We
  first need to observe that the cycle-induced vertex partition
  $V_1,\ldots, V_r$ refines $Y_1,\ldots, Y_K$.
  Indeed, assume that $V_i\cap Y_j$ is neither empty nor $V_i$.
  Then there is a directed edge $e=uv\in C_i$ with $u\in V_i\cap Y_j$ but
  $v\notin V_i\cap Y_j$.
  In particular, the directed edge $e$ belongs to neither $G[Y_j]$ nor to any other of the
  $G[Y_k]$, contradicting \eqref{eq: Y contributes}.
  Hence the partition $Y_1,\ldots, Y_K$ consists of the $r$ parts of
  $V_1,\ldots, V_r$.
  There are $K^r$ ways to pick these parts, each amounting to one $1$
  in the summation.
  Since $r>1$ this shows that the total contribution of $C\notin \scr
  H$ is a multiple of $K^2$ and after division of $K$ still vanishes
  modulo $K$.
\end{proof}

In particular, we can count the number of Hamiltonian cycles modulo $K$ as
\begin{equation}\label{eq: H}
|\scr H| = \sum_{C\in\scr C} h(C) \pmod K\,.
\end{equation}
We next rewrite the right hand side by an application of
inclusion--exclusion.

For a vertex set $Z\subseteq V$ let $\scr F(Z)$ be the family of edge
subsets in which every vertex has outdegree $1$ and all terminals are in
$Z$.
In other words, $\scr F(Z)$ is the family of total functions
$f\colon V\rightarrow Z$ where $vf(v)$ is a directed edge in the graph for all $v\in V$.

\begin{lem}\label{lem: ie}
\[\sum_{C\in\scr C} h(C) = \sum_{Z\subseteq V} (-1)^{|V\setminus Z|}\sum_{F\in \scr F(Z)}
h(F)\,\pmod K.\]
\end{lem}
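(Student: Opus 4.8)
The plan is to read both sides as sums over \emph{functions} rather than over edge sets, and to recognize the right-hand side as an inclusion--exclusion (M\"obius inversion on the subset lattice) that isolates the surjective ones. Each $F\in\scr F(Z)$ is by definition a map $f\colon V\to Z$ with $vf(v)\in E$, and $f$ is recovered from its edge set $\{vf(v):v\in V\}$ as the unique out-neighbour of each vertex; since $h$ depends only on the edge set, I may freely identify $F$ with $f$ and write $h(f)$ for $h$ applied to that edge set. The cycle covers in $\scr C$ are exactly the \emph{bijections} $f\colon V\to V$ with $vf(v)\in E$ (a cycle cover is a permutation of $V$), and because $V$ is finite, being bijective is the same as being surjective, i.e. $f(V)=V$.

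First I would exchange the order of summation on the right-hand side. Every $F$ occurring there is a map into $V$ with $vf(v)\in E$, and a fixed such $f$ is counted once in $\scr F(Z)$ precisely for each $Z\supseteq f(V)$. Collecting terms, the right-hand side equals
\[
\sum_{\substack{f\colon V\to V\\ vf(v)\in E}} h(f)\, c(f),
\qquad
c(f)=\sum_{Z\supseteq f(V)}(-1)^{|V\setminus Z|}.
\]

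Next I would evaluate the coefficient $c(f)$. Writing $m=|V\setminus f(V)|$ and $Z=f(V)\cup S$ with $S\subseteq V\setminus f(V)$, the exponent $|V\setminus Z|$ equals $m-|S|$, so by the binomial theorem
\[
c(f)=\sum_{S\subseteq V\setminus f(V)}(-1)^{m-|S|}
=(-1)^m\sum_{j=0}^{m}\binom{m}{j}(-1)^{j}
=(-1)^m(1-1)^m,
\]
which is $1$ when $m=0$ and $0$ otherwise. Hence $c(f)=[\,f(V)=V\,]$, so only the surjective---equivalently bijective---maps survive. These are precisely the cycle covers, and the right-hand side collapses to $\sum_{C\in\scr C}h(C)$, the left-hand side. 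The identity in fact holds exactly, before any reduction, so it holds $\pmod K$ in particular.

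I expect the only real obstacle to be bookkeeping rather than substance: one must check that identifying a function with its edge set is legitimate (it is, since every vertex has out-degree exactly $1$, so the edge set determines $f$, and $h$ sees only the edge set), and that a single map $f$ is gathered across all the sets $\scr F(Z)$ containing it with the correct signs. Once the summation is reorganized this way, the alternating-sum evaluation is the standard inclusion--exclusion that turns ``all functions into $Z$'' into ``surjections onto $V$,'' and no property of $h$ beyond its dependence on the edge set is used.
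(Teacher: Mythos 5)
Your proof is correct and takes essentially the same route as the paper: the paper likewise fixes an edge set $F$, observes that $F\in\scr F(Z)$ precisely when $Z$ contains its terminal set $X=f(V)$, and cancels the alternating sum over such $Z$ whenever $X\subsetneq V$, which is exactly your coefficient computation $c(f)=(-1)^m(1-1)^m$. The only cosmetic difference is that you evaluate the coefficient explicitly via the binomial theorem (and treat cycle covers uniformly as the surviving surjections), whereas the paper phrases the cancellation as there being equally many even and odd supersets and handles $C\in\scr C$ as a separate, trivial case.
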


\begin{proof}
  Consider a cycle cover $C\in \scr C$.
  It belongs to $\scr F(Z)$ exactly when $Z=V$,
  so its total contribution to the right hand side is $h(C)$.

  Consider now an edge subset $F\in \scr F(V) \setminus \scr C$ that is not a cycle
  cover.
  Let $X$ denote the set of vertices appearing as terminals for the
  directed edges in $F$.
  Then $F$ belongs to $\scr F(Z)$ for every $Z$ with $X\subseteq
  Z\subseteq V$.
  By the principle of inclusion--exclusion, the number of $Z$ with
  $X\subseteq Z\subseteq V$ is even (using that $X\subsetneq V$ because
  $F$ is not a cover).
  In particular there are as many odd $Z$ as even ones.
  Thus, the contributions of all $F\in \scr F(V) \setminus \scr C$ cancel.
\end{proof}

Recall that $\degree_v(X)$ denotes the number of directed edges from $v$ to a vertex
in $X$.
For pairwise disjoint subsets $Y_1,\ldots, Y_K$ of $V$ whose union includes all of $Z$, let $\degree_v(Z;
Y_1,\ldots,Y_K)$ denote the number of directed edges from $v$ to a vertex in $Z$
that stay in the same part as $v$, formally
\[
\degree_v(Z; Y_1,\ldots,Y_K) = N(Z\cap Y_k)\,,\quad \text{where $v\in Y_k$}\,.
\]

\begin{lem}\label{lem: rewrite}
  For $Z\subseteq V$, 
\[ \sum_{F\in \scr F(Z)} h(F) = 
\frac{1}{K} \prod_{v\in V\setminus Z} \degree_v(Z)
\sum_{Y_1,\ldots,Y_K}^Z
\prod_{v\in Z} \degree_v(Z;Y_1,\ldots, Y_K)\pmod{K} \,.\]
\end{lem}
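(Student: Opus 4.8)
The plan is to start from the definition of $h$, interchange the two summations, and then decompose a $K$-partition of $V$ into a $K$-partition of $Z$ together with an independent assignment of the vertices of $V\setminus Z$ to the $K$ parts.

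First I would substitute the definition of $h(F)$ and pull the sum over partitions of $V$ to the outside:
\[
\sum_{F\in\scr F(Z)} h(F)
= \frac1K \sum_{Y_1,\ldots,Y_K}^V \sum_{F\in\scr F(Z)}\ \prod_{v\in V} [\,vf(v)\text{ lies inside a single part}\,]\,,
\]
where I identify $F\in\scr F(Z)$ with the total function $f\colon V\to Z$ it represents. For a \emph{fixed} partition $Y_1,\ldots,Y_K$ the inner sum factors over vertices: each $v$ may independently choose any out-edge $vf(v)$ with $f(v)\in Z$ whose head lies in the same part $Y_k$ as $v$. The number of such choices is precisely $\degree_v(Z\cap Y_k)$ with $v\in Y_k$. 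Hence the inner sum equals $\prod_{v\in V}\degree_v(Z\cap Y_{k(v)})$, where $k(v)$ denotes the index of the part containing $v$.

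Next I would separate this product into its factors over $Z$ and over $V\setminus Z$, and reparametrise the outer sum. A $K$-partition $Y_1,\ldots,Y_K$ of $V$ is the same data as a $K$-partition $Z_1,\ldots,Z_K$ of $Z$ (namely $Z_k=Z\cap Y_k$) together with an arbitrary assignment $g\colon V\setminus Z\to\{1,\ldots,K\}$ recording which part each outside vertex joins. Under this bijection the product becomes $\prod_{v\in Z}\degree_v(Z_{k(v)})\cdot\prod_{v\in V\setminus Z}\degree_v(Z_{g(v)})$, the first factor depending only on the partition of $Z$. Summing over all $g$ factors vertexwise, and for each $v\in V\setminus Z$ the single-vertex sum collapses,
\[
\sum_{k=1}^K \degree_v(Z_k) = \degree_v(Z_1\cup\cdots\cup Z_K) = \degree_v(Z)\,,
\]
because the $Z_k$ partition $Z$ and degrees into disjoint sets add. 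The resulting factor $\prod_{v\in V\setminus Z}\degree_v(Z)$ is \emph{independent} of the chosen partition of $Z$, so it pulls out of the outer sum. What remains is $\sum_{Z_1,\ldots,Z_K}^Z \prod_{v\in Z}\degree_v(Z_{k(v)})$, and recognising $\degree_v(Z_{k(v)})=\degree_v(Z;Z_1,\ldots,Z_K)$ (valid since $v\in Z$ forces $Z\cap Z_{k(v)}=Z_{k(v)}$) yields exactly the claimed right-hand side after renaming $Z_k$ to $Y_k$.

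The identity in fact holds exactly, not merely modulo $K$: every step is a rearrangement of the same $\frac1K\cdot(\text{integer})$ quantity, and the $\pmod K$ is carried only for consistency with the surrounding chain. The one step that requires care is the reparametrisation of the outer sum together with the observation that the complement vertices can be summed \emph{independently} of one another and of the $Z$-partition; this is exactly where the factor $\prod_{v\in V\setminus Z}\degree_v(Z)$ detaches and the sum collapses from partitions of $V$ down to partitions of $Z$. Everything else is the distributive law.
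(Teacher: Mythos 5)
Your proof is correct, and its first half---substituting the definition of $h$, swapping the two sums, and counting the functions $F\in\scr F(Z)$ compatible with a fixed partition vertexwise to obtain $\prod_{v\in V}\degree_v(Z;Y_1,\ldots,Y_K)$---is exactly the paper's opening move. Where you genuinely diverge is in establishing the key identity that detaches the factor $\prod_{v\in V\setminus Z}\degree_v(Z)$: the paper proves it by induction on $|V\setminus Z|$, peeling off one complement vertex $w$ at a time, splitting the sum into $K$ subsums according to the part containing $w$, and collapsing each split via $\sum_{k=1}^K\degree_w(Z\cap Y_k)=\degree_w(Z)$. You instead handle all complement vertices simultaneously, through the bijection between $K$-partitions of $V$ and pairs consisting of a $K$-partition of $Z$ and an assignment $g\colon V\setminus Z\to\{1,\ldots,K\}$, followed by distributivity to factor the sum over $g$ vertexwise. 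These rest on the same elementary computation---the paper's induction is your one-shot factorization unrolled vertex by vertex---but your version is arguably cleaner: it makes explicit that the complement vertices decouple both from one another and from the partition of $Z$, which is precisely the structural fact the induction obscures, and it dispenses with the inductive bookkeeping. Your closing observation that the identity holds exactly over the rationals (both sides being $\frac{1}{K}$ times the same integer), with the $\pmod K$ carried only for consistency with the surrounding chain, is also correct and is implicit in the paper's argument.
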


\begin{proof}
Expanding $h(F)$ and rearranging, we have
  \begin{align*}
    \sum_{F\in \scr F(Z)} h(F) &=     
    \frac{1}{K} \sum_{F\in \scr F(Z)}
    \sum_{Y_1,\ldots, Y_K}^V
    \prod_{e\in F}
    [
    e\in G[Y_1]\cup\cdots\cup G[Y_K]
    ]\\
&=   \frac{1}{K} \sum_{Y_1,\ldots,Y_K}^V
    \sum_{F\in \scr F(Z)}
    \prod_{e\in F}
    [
    e\in G[Y_1]\cup\cdots\cup G[Y_K]
    ]\\
&=  \frac{1}{K}
    \sum_{Y_1,\ldots,Y_K}^V
    \prod_{v\in V} \degree_v(Z; Y_1,\ldots, Y_K)\,,
 \end{align*}
 where the last step is based on counting in two different ways the
 number of ways that every vertex in $V$ can choose another vertex in
 $Z$ in the same part of $Y_1,\ldots, Y_K$.

\bigskip
The next step is to establish
\begin{equation}\label{eq: induction}
\sum_{Y_1,\ldots,Y_K}^V
\prod_{v\in V} \degree_v(Z;Y_1,\ldots, Y_K) =
\biggl(\prod_{v\in V\setminus Z} \degree_v(Z) \biggr)
\sum_{Y_1,\ldots,Y_K}^Z
\prod_{v\in Z} \degree_v(Z;Y_1,\ldots, Y_K)
\end{equation}
by induction in $|V\setminus Z|$. 

For $V=Z$ the claim is vacuous.
Otherwise, select a vertex $w\in V\setminus Z$ and split the sum on
the left hand side into $K$ sums according to which part includes the
vertex $w$.
\[
\sum_{Y_1,\ldots,Y_K}^V
\prod_{v\in V} \degree_v(Z;Y_1,\ldots,Y_K)= S_1+\cdots + S_K,\]
where
\[
S_k = \sum_{Y_1,\ldots,Y_K}^{V\setminus\{w\}}
\prod_{v\in V}
\degree_v(Z;Y_1,\ldots,Y_k\cup\{w\},\ldots,Y_K)\,.
\]
In the product, the factor contributed by the term corresponding to
$v=w$ is 
\[\degree_w(Z;Y_1,\ldots,Y_k\cup\{w\},\ldots, Y_K) = \degree_w(Z\cap (Y_k\cup
\{w\})) = \degree_w(Z\cap Y_k)\,,\]
because $w$ does not belong to $Z$.
Thus,
\begin{align*}
S_k &
=\sum_{Y_1,\ldots,Y_K}^{V\setminus\{w\}}
\degree_w(Z\cap Y_k)
\prod_{v\in V\setminus\{w\}}
\degree_v(Z;Y_1,\ldots,Y_K)\,.\\
\end{align*}

Summing all the $S_k$ we arrive at
\begin{align*}
S_1+ \cdots+S_K &= \sum_{Y_1,\ldots,Y_K}^{V\setminus \{w\}}
\biggl\{
\biggl (\sum_{k=1}^K \degree_w(Z\cap Y_k)\biggr)
 \prod_{v\in V\setminus\{w\}}
\degree_v(Z;Y_1,\ldots,Y_K)\biggr\}\\
 &= \sum_{Y_1,\ldots,Y_K}^{V\setminus \{w\}}
 \degree_w(Z) \prod_{v\in V\setminus\{w\}}
\degree_v(Z;Y_1,\ldots,Y_K)\\
 &= \degree_w(Z)  \sum_{Y_1,\ldots,Y_K}^{V\setminus \{w\}}
\prod_{v\in V\setminus\{w\}}
\degree_v(Z;Y_1,\ldots,Y_K)\\
&= \biggl(\prod_{w\in V\setminus Z} \degree_w(Z)\biggr)
  \sum_{Y_1,\ldots,Y_K}^Z
\prod_{v\in Z}
\degree_v(Z;Y_1,\cdots,Y_K) 
\,,
\end{align*}
where the last identity follows by induction over $w$, establishing \eqref{eq: induction}.
\end{proof}

We are ready to establish theorem~\ref{thm: comb}, namely,
\begin{equation*}
  |\scr H |  = \frac{1}{K} \sum_{Z,Y_1,\ldots,Y_K}^V (-1)^{|V\setminus  Z|}
  \biggl(\prod_{z\in Z} \degree_z(V\setminus Z)\biggr)
  \prod_{k=1}^K \prod_{y\in Y_k} \degree_y(Y_k)\pmod K\,.
\end{equation*}

\begin{proof}[Proof of theorem~\ref{thm: comb}]
Write $\overline{Z} = V\setminus Z$.
Combining \eqref{eq: H} and lemmas
\ref{lem: ie} and \ref{lem: rewrite} we get:

\begin{align*}
 |\scr H| & = \frac{1}{K} \sum_{Z\subseteq V} \sum_{Y_1,\ldots,Y_K}^Z (-1)^{|\overline
   Z|} \prod_{v\in \overline Z} \degree_v(Z) 
\prod_{v\in Z} \degree_v(Z;Y_1,\ldots,Y_K)  \pmod K
\\
 & = \frac{1}{K}  \sum_{Z,Y_1,\ldots,Y_K}^V
(-1)^{|Z|} \prod_{v\in Z} \degree_v(\overline Z) 
\prod_{v\in \overline Z} \degree_v(\overline Z;Y_1,\ldots,Y_K)  \pmod K
\,,
\end{align*}
where we just changed the original summand  $Z$ to its complement
$\overline Z$.

Finally, every $y\in \overline Z$ belongs to exactly one $Y_k$, and
for this value of $k$ we have $\degree_y(\overline Z; Y_1,\ldots,Y_K) =
\degree_y(Y_k)$.
Thus,
\begin{align*}
\prod_{y\in \overline Z} \degree_y(\overline Z; Y_1,\ldots, Y_K)  = 
\prod_{k=1}^K \prod_{y\in Y_k} \degree_y(Y_k)\,.
\end{align*}
\end{proof}

\end{document}